\setlist[enumerate]{leftmargin=.5in}
\setlist[itemize]{leftmargin=.5in}
\crefname{hypothesis}{Hypothesis}{Hypotheses}
\title{Regression analysis of distributional data\\
through Multi-Marginal Optimal transport\thanks{
\funding{This research was partially supported by 
NSF under grants 1807664, 1839441, and
AFOSR under grant FA9550-20-1-0029.}}}
\author{Amirhossein Karimi\thanks{Department of Mechanical
  and Aerospace Engineering,
  University of California, Irvine,
  California, USA
  (\email{amirhosk@uci.edu}).}
\and Tryphon T. Georgiou\thanks{Department of Mechanical
  and Aerospace Engineering,
  University of California, Irvine,
  California, USA
  (\email{tryphon@uci.edu}).}
}
\def\spacingset#1{\def\baselinestretch{#1}\small\normalsize}
\newcommand{\PFO}{Perron-Frobenius operator}
\newcommand{\X}{\mathbb{X}} 
\newcommand{\Rd}{\mathbb{R}^d}
\newcommand{\R}{\mathbb{R}}
\newcommand{\PD}{\mathcal{P}_2(\mathbb{X})}
\newcommand{\dd}{{\rm d}}
\begin{document}

\maketitle

\begin{abstract}
  We formulate and solve a regression problem with time-stamped distributional data.
Distributions are considered as points in the Wasserstein space of probability measures, metrized by the $2$-Wasserstein metric, and may represent images, power spectra, point clouds of particles, and so on.
The regression seeks a curve in the Wasserstein space that passes closest to the dataset.
Our regression problem allows utilizing general curves in a Euclidean setting  (linear, quadratic, sinusoidal, and so on), lifted to corresponding measure-valued curves in the Wasserstein space. It can be cast as a multi-marginal optimal transport problem that allows efficient computation. Illustrative academic examples are presented.
\end{abstract}

\begin{keywords}
 Measure-valued curves, Wasserstein metric, Perron-Frobenius Operator, Multi-marginal optimal transport
\end{keywords}


\section{Introduction}
Regression analysis seeks a functional dependence of a set of variables with respect to an independent one, or possibly more, by suitable minimization of residuals. In our case the dependent variables are probability distributions and the independent is time. Thereby, the functional dependence may be seen as an estimate of underlying dynamics and flow, while the distributions may represent populations of indistinguishable particles\cite{nichols2017using,haasler2019estimating}, longitudinal succession of images~\cite{niethammer2011geodesic}, spectral densities~\cite{jiang2011geometric}, traffic data~\cite{hong2014geodesic} and so on. 

In order to quantify distance between distributions, it is natural to consider the Wasserstein metric of optimal mass transport. The reason is that 
in this metric, small spatial displacement of mass results in small deviation in the values of the metric (weak continuity)~\cite{villani2003topics}. Thus, we seek to quantify residuals in regression analysis of distributions using the Wasserstein metric. An added insight, when using the so-called $2$-Wasserstein metric, is that regression of Dirac masses reduces to ordinary regression in a Euclidean setting (see Proposition \ref{Euclid_geod_consistency}).

In recent years, the Wasserstein metric has seen a rapidly increasing range of applications in subjects such as computer vision~\cite{haker2004optimal}, machine learning~\cite{liu2019wasserstein}, data fusion~\cite{agueh2011barycenters,chen2018measure,benamou2019second,haasler2019estimating}, and mathematical physics \cite{jordan1998variational,chen2019stochastic}, to name a few. In particular we want to bring attention to recent attempts to develop a framework for principal component analysis in the Wasserstein space~\cite{bigot2017geodesic,cazelles2018geodesic,wang2013linear,seguy2015principal} as it parallels the present work.
In this context, in spite of great many analogies with the Euclidean structure, a salient feature of the Wasserstein geometry on distributions is the curvature of the space. As a result, for instance, a barycenter of a collection of distributions may fail to lie on the first principal component~\cite{cazelles2018geodesic}, and optimization problems are often non-convex and computational demanding.

In this work we seek to determine, via regression, curves in the Wasserstein space as models for given sequences of time-stamped distributions. 
Related problems were considered in \cite{jiang2011geometric,chen2018measure,benamou2019second}. In particular, \cite{jiang2011geometric} considered approximating time-stamped spectral distributions of a non-stationary time series by a Wasserstein geodesic, as a means to obtain a non-parametric model for the underlying dynamics. This problem turned out to be nonlinear and computationally demanding. References \cite{chen2018measure,benamou2019second} proposed interpolation using splines in Wasserstein space. 
The present work follows a similar path in that we seek measure-valued curves, e.g., linear, quadratic, and so on, and a probability law on such curves. However, instead, we consider regression problems where the marginals approximate the distributional snapshots. Our approach may be seen as generalizing least-squares regression problems in Wasserstein space. We show that using a multi-marginal optimal transport formulation~\cite{pass2015multi}, regression problems can be recast as a linear program. Thereby, Sinkhorn's algorithm can be employed to solve efficiently  entropy-regularized versions.

The structure and contributions in this paper are as follows. Section \ref{sec:preliminaries} provides background on the theory of optimal mass transport that underlies this work. 
In Section \ref{sec:regression}, the concept of measure-valued curves is described and our regression problem is formulated. 
In Section \ref{sec:multimarginal}, a multi-marginal formulation of the regression problem is proposed, followed by Section \ref{sec:discretization}, where discretization and a generalized Sinkhorn algorithm to solve the multi-marginal problem are given. 
Sections \ref{sec:Gaussian} and \ref{sec:GMM} present case studies based on Gaussian distributions and Gaussian mixture models, respectively. Section \ref{sec:invariant} details a potential application of the framework to estimate the invariant measure of a dynamical system by extrapolating the distribution along the flow obtained by the regression on specified snapshots. We conclude in Section \ref{sec:conclusion} with remarks and ongoing work.


\section{Preliminaries on optimal mass transport}\label{sec:preliminaries}

We herein provide background on the theory of optimal mass transport (OMT) that underlies the developments in the body of the paper, and refer to  \cite{villani2003topics,villani2008optimal,ambrosio2013user} for more detailed exposition.

Let $\X= \Rd$ be equipped with the Borel $\sigma$-algebra $\mathcal{B}(\X)$.
Let $\mu_0$ and $\mu_1$ be two probability measures in $\PD$, the space of probability measures with finite second moments. We consider the problem to minimize the quadratic cost
\[
  \int_{\X} \|T(x)-x\|_2^2 ~\dd\mu_0(x)   
\]
over the space of transport maps
\begin{align*}
    T: \quad &\X \rightarrow \X\\ 
    &x \mapsto T(x)
\end{align*}
that are measurable and ``push forward'' $\mu_0$ to $\mu_1$, a property written as $T_\#\mu_0 = \mu_1$. This means that for all $A\in \mathcal{B}(\X)$, we have $\mu_1(A)=\mu_0(T^{-1}(A))$ or, equivalently, that for all integrable functions $f(x)$ with respect to $\mu_1$,
\begin{equation}
\label{push_forward}
\int_\X f(x)\dd \mu_1(x)= \int_\X f(T(x)) \dd \mu_0(x).
\end{equation}
If $\mu_0$ is absolutely continuous with respect to the Lebesgue measure, it is known that the optimal transport problem has a unique solution $\hat{T}(x)$ which turns out to be the gradient of a convex function $\phi(x)$, i.e., $\hat{T}(x)=\nabla\phi(x)$.

The problem is nonlinear and, in general, the optimal transport map may not exist. To this end, in 1942, Kantorovich introduced a relaxed formulation in which, instead of a transportation map $T$, one seeks a joint distribution (referred to as coupling) $\pi$ on $\X\times\X$, having marginals $\mu_0$ and $\mu_1$ along the two coordinates. The Kantorovich formulation is
\[
    \inf_{\pi \in \Pi(\mu_0, \mu_1)} \int_{\X\times\X} \|x-y\|^2 \dd\pi(x,y)
\]
where $\Pi(\mu_0, \mu_1)$ is the space of all couplings with the marginals $\mu_0$ and $\mu_1$. 
In case the optimal transport map exists, the optimal coupling coincides with $\hat{\pi} = ({\rm Id} \times \hat{T})_\#\mu_0$,
where ${\rm Id}$ denotes the identity map. 

The square root of the quadratic transportation cost provides a metric on  $\PD$, known as the Wasserstein $2$-metric and denoted by $W_2$, which makes $\PD$ a geoesic space and induces a formal Riemannian structure on $\PD$ as discussed in \cite{ambrosio2004gradient,villani2008optimal}. Specifically, a constant-speed geodesic between $\mu_0$ and $\mu_1$ is given by
\begin{equation}
    \mu_t=\{(1-t)x+t\hat{T}(x)\}_\#\mu_0,~~0\leq t \leq 1
\end{equation}
 and is known as {displacement interpolation} or a {McCann  geodesic}; i.e., it satisfies
\[
    W_2(\mu_s,\mu_t)=(t-s)W_2(\mu_0,\mu_1),\quad 0\leq s<t\leq 1.
\]
In the Kantorovich formulation, the geodesic reads
\begin{equation}
    \mu_t=\{(1-t)x+ty\}_\#\hat{\pi},~~0\leq t \leq 1.
\end{equation}

We recall the definition of weak convergence of probability measures:
A sequence $\{\mu_k\}_{k\in \mathbb{N}}\subset \PD$ converges weakly to $\mu$, written as $\mu_k\rightharpoonup \mu$, if
\[\lim_{k\to \infty} \int_\X f(x)\dd \mu_k = \int_\X f(x)\dd \mu \quad{\rm for~all}~ f\in C_b(\X),\]
where $C_b(\X)$ is the Banach space of continuous, bounded and real-valued functions on $\X$.
\begin{lemma}[Gluing lemma~\cite{ambrosio2004gradient,villani2008optimal}]
\label{gluing}
Let $\X_1$, $\X_2$, and $\X_3$ be three copies of $\X$. 
Given three probability measures $\mu_i(x_i)\in \mathcal{P}_2(\mathbb{X}_i),~i=1,2,3$ and the couplings $\pi_{12}\in \Pi(\mu_1,\mu_2)$, and $\pi_{13}\in\Pi(\mu_1,\mu_3)$, there exists a probability measure $\pi(x_1,x_2,x_3)\in \mathcal{P}_2(\X_1\times \X_2 \times \X_3)$ such that $(x_1,x_2)_\#\pi=\pi_{12}$ and $(x_1,x_3)_\#\pi=\pi_{13}$. Furthermore, the measure $\pi$ is unique if either $\pi_{12}$ or $\pi_{13}$ are induced by a transport map.
\end{lemma}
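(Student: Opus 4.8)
The plan is to build the three-way coupling $\pi$ by \emph{disintegrating} both given couplings against their shared first marginal $\mu_1$ and then forming the conditional product. Since $\X = \Rd$ is a Polish space, the disintegration theorem applies: we may write
\[
\pi_{12}(\dd x_1, \dd x_2) = \mu_1(\dd x_1)\, K_{12}(x_1, \dd x_2), \qquad \pi_{13}(\dd x_1, \dd x_3) = \mu_1(\dd x_1)\, K_{13}(x_1, \dd x_3),
\]
where $K_{12}$ and $K_{13}$ are Markov kernels, i.e.\ measurable families of probability measures on $\X_2$ and $\X_3$ respectively, indexed by $x_1$. These exist and are $\mu_1$-a.e.\ uniquely determined precisely because both couplings share $\mu_1$ as their $\X_1$-marginal.

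Next I would define the candidate measure by making $x_2$ and $x_3$ conditionally independent given $x_1$,
\[
\pi(\dd x_1, \dd x_2, \dd x_3) := \mu_1(\dd x_1)\, K_{12}(x_1, \dd x_2)\, K_{13}(x_1, \dd x_3).
\]
That this is a genuine probability measure follows from Tonelli's theorem together with the measurability of the kernels, and it lies in $\mathcal{P}_2(\X_1\times\X_2\times\X_3)$ because its three coordinate marginals are $\mu_1,\mu_2,\mu_3$, each of finite second moment, so that $\int(\|x_1\|^2+\|x_2\|^2+\|x_3\|^2)\,\dd\pi<\infty$. The required projection identities are then immediate: integrating out $x_3$ uses $\int_{\X_3} K_{13}(x_1,\dd x_3) = 1$ to recover $(x_1,x_2)_\#\pi = \pi_{12}$, and symmetrically $(x_1,x_3)_\#\pi = \pi_{13}$.

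For the uniqueness claim, suppose $\pi_{12}$ is induced by a transport map, say $\pi_{12} = (\mathrm{Id}\times T)_\#\mu_1$. Then the kernel satisfies $K_{12}(x_1,\cdot) = \delta_{T(x_1)}$ for $\mu_1$-a.e.\ $x_1$, so \emph{any} admissible gluing $\pi$ must have $x_2 = T(x_1)$ holding $\pi$-a.s.; the coordinate $x_2$ carries no randomness beyond $x_1$. Consequently the law of $(x_1,x_3)$ under any admissible $\pi$ is forced to equal $\pi_{13}$, while $x_2$ is pinned down as the measurable function $T(x_1)$, leaving no freedom. This determines $\pi$ uniquely, and the same argument applies verbatim if instead $\pi_{13}$ is the one induced by a map.

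The main obstacle I anticipate is the careful justification of the disintegration step, specifically the measurability and $\mu_1$-a.e.\ uniqueness of the kernels $K_{12}$ and $K_{13}$, which is exactly where the Polish-space structure of $\X$ is essential; once the kernels are in hand, the construction and the marginal verifications are routine applications of Tonelli's theorem.
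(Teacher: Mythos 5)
Your proposal is correct and follows essentially the same route as the sources the paper relies on: the paper states the gluing lemma without proof, citing \cite{ambrosio2004gradient,villani2008optimal}, and the argument there is precisely yours---disintegrate $\pi_{12}$ and $\pi_{13}$ over the shared marginal $\mu_1$ and glue via the conditional product kernel, with the second-moment bound following from the marginals as you note. Your uniqueness argument is also sound; stated slightly more explicitly, if $\pi_{12}=(\mathrm{Id}\times T)_\#\mu_1$ then any admissible $\pi$ satisfies $x_2=T(x_1)$ $\pi$-a.s., hence $\pi$ equals the pushforward of $\pi_{13}$ under $(x_1,x_3)\mapsto(x_1,T(x_1),x_3)$, which pins it down completely.
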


Thus, for any two given couplings, which are consistent along the shared coordinate, the gluing lemma states that we can find a multi-coupling on the product space $(\X_1\times \X_2 \times \X_3)$ whose projections onto each pair of coordinates match the given couplings. 
We now briefly discuss three subtopics of interest in sequel.

\subsection{Gaussian marginals}
In case $\mu_i \sim \mathcal N({m}_i, {C}_i)$ for $i\in\{0,1\}$ are Gaussian with mean $\mu_i$ and covariance $C_i$, respectively, the solution to OMT can be given in closed form~\cite{malago2018wasserstein}
\begin{equation}
\label{Gaussian_W2}
    W_2(\mu_0,\mu_1)=\sqrt{|| {m}_0- {m}_1||^2+{\rm tr}( {C}_0+ {C}_1-2  {S})}
\end{equation}
where ${\rm tr}(.)$ stands for trace and $ {S}$ is an optimal (uniquely defined) cross-covariance term which turns out to be
\begin{equation}
     S= (C_0C_1)^{\frac{1}{2}} ={C}_0^{1/2}( {C}_0^{1/2} {C}_1 {C}_0^{1/2})^{1/2} {C}_0^{-1/2}.
\end{equation}
The McCann  geodesic $\mu_t$ for all $0\leq t \leq 1$
is a Gaussian distribution with mean $ {m}_t = (1-t) {m}_0 +t {m}_1$ and covariance
\begin{equation}
\label{Gaussian_geodesic}
     {C}_{t}= {C}_0^{-1/2}((1-t) {C}_0+t( {C}_0^{1/2} {C}_1 {C}_0^{1/2})^{1/2})^2  {C}_0^{-1/2}.
\end{equation}

\subsection{Discrete measures}
Suppose the marginals are discrete probability measures on a finite set $X\subset \Rd$, that is, $\mu_0 =\sum_{x_0\in X}p_{ x_0}\delta_{ x_0}$ and $\mu_1 =\sum_{ x_1\in X}q_{ x_1}\delta_{ x_1}$, where the non-negative weights $p_{ x_0}$ and $q_{ x_1}$ are such that $\sum_{x_0\in X}p_{ x_0}=\sum_{ x_1\in X}q_{ x_1}=1$. The transport plan is now in the form of a matrix $( {\Pi}_{ x_0, x_1})_{( x_0, x_1)\in X\times X}$ and its entries represent the amount of mass moved from $ x_0$ to $ x_1$. The Kantorovich problem in discrete setting can be written as the following linear program:
\begin{align}
\label{Opt_transport}
    \min_{ {\Pi}} \quad & \sum_{ x_0, x_1\in X}{c( x_0, x_1) {\Pi}_{ x_0, x_1}}\\ \nonumber
    {\rm s.t.} \quad &\sum_{ x_1\in X}{ {\Pi}_{ x_0, x_1}=p_{ x_0}},~ \forall  x_0 \in X\\ \nonumber
    &\sum_{ x_0\in X}{ {\Pi}_{ x_0, x_1}=q_{ x_1}},~ \forall  x_1\in X\\ \nonumber
    & {\Pi}_{ x_0, x_1}\geq 0,~ \forall  x_0, x_1\in X,
\end{align}
where $c( x_0, x_1)=|| x_0- x_1||_2^2$ is the transportation cost. (Throughout, we assume that transportation costs are quadratic.) Although cast as a linear program, this problem suffers from a heavy computational cost in large scale applications. It was pointed out in~\cite{cuturi2013sinkhorn} that there are computation advantages by introducing an entropy regularization term since, in that case, the problem can then be solved efficiently using the Sinkhorn algorithm.

\subsection{Multi-marginal optimal transportation}
In multi-marginal transport, a set of marginals are given and a law is sought that is consistent with the given marginals and minimizes a cost. This problem and its applications are surveyed in \cite{pass2015multi,nenna2016numerical}. The Kantorovich formulation of this problem for given marginals $\{\mu_i\}_{i=1}^N$ and transportation cost $c( x_1,\cdots, x_N)$ is to minimize 
\begin{equation}
\int_{\X^N} c( x_1,\cdots, x_N)d\gamma( x_1,\cdots, x_N)
\end{equation}
where the multi-coupling $\gamma \in \mathcal{P}_2(\X^N)$ is such that ${ x_i}_{\#}\gamma=\mu_i$. This is a linear optimization problem over a weakly compact and convex set for which the numerical methods to solve it efficiently are well studied in \cite{nenna2016numerical,benamou2019generalized}.

\section{Regression in Wasserstein space using measure-valued curves}\label{sec:regression}
We generalize regression problems, thought of in the setting of a Euclidean space, to the space of probability measures. To this end, for a given set $\{\mu_{t_i}\}_{i=1}^{N}\subset \PD$ of probability measures that are indexed by timestamps $\{t_i\}_{i=1}^{N} \subset \left[0,1\right]$, we seek suitable interpolating {\em measure-valued curves}. Notice that  when $\mu_{t_i}$ is absolutely continuous with respect to the Lebesgue measure, (by a slight abuse of notation) we use $\mu_{t_i}$ to denote both the measure and its density function, depending on the context.

\subsection{Measure-valued curves}

 We consider primarily two classes of functions (curves) from the time interval $[0,1]$ to the state space $\X$, linear and quadratic polynomials, denoted by
 ${\rm Lin}([0, 1],\X)$ and ${\rm Quad}([0, 1],\X)$, respectively.
 Generically, we use $\Omega$ to denote either class. 
In the sequel, we consider probability laws on linear, quadratic, and possibly other classes of functions, so as to build corresponding classes
 of measure-valued curves.
 
 For instance, in the case of
 $\Omega={\rm Lin}([0, 1],\X)$, a probability law can be expressed as a coupling between the endpoints of line segments, i.e., a probability law $\pi$ on $\X^2:=\X\times \X$. This is due to the fact that there is a bijective correspondence $(X_{0,1})$ between each element in $\Omega$ and $\X^2$ using the endpoints at $t=0$ and $t=1$, i.e., $x_0$ and $x_1~\in\X$, such that for any $\omega=(\omega_t)_{t\in[0,1]}\in\Omega$, we have $X_{0,1}(\omega):=(
x_0,x_1)$. We equip $\Omega$ with the canonical  $\sigma$-algebra generated by the
projection maps $(X_t)_{t\in [0,1]}$, defined by $X_t(\omega):=\omega_t$. 
In this study, we consider only the probability measures with finite second moments over $\X^2$, that is, $\mathcal{P}_2(\X^2)$, and accordingly the induced probability measures over $\Omega$. Given any probability measure $\pi$ on $\X^2$, the one-time marginals can be obtained through $\nu_t:=((1-t)x_0+tx_1)_\#\pi,~t\in[0,1]$.
 
 An alternative representation of a probability law on $\Omega={\rm Lin}([0, 1],\X)$ may be given in terms of a coupling between one endpoint, $x_0$, and a velocity $v$. In this representation, the one-time marginals are cast as $\nu_t:=(x_0+tv)_\#\pi,~t\in[0,1]$. In the rest of this paper, we use the first representation to define probability laws on ${\rm Lin}([0, 1],\X)$.

Similar setting can be defined for $\Omega={\rm Quad}([0, 1],\X)$ where $\Omega$ is, clearly, bijective to $\X^3$. Herein, any probability law on $\Omega$ can be expressed as a probability measure over $\X^3$, namely, $\pi\in \mathcal{P}_2(\X^3)$. Also, the one-time marginals can be obtained via $\nu_t:=(x_0+tx_1+t^2x_2)_\#\pi,~t\in[0,1]$. For ease of notation, we use $x_0$, $x_1$, and $x_2$ to denote the initial point, velocity, and acceleration, respectively. Although one can consider other parameterizations  of quadratic curves, e.g. through three points lying on each curve with suitable timestamps, we derive the results for the former representation without loss of generality. 
 
In the next subsection, we detail the regression formalism of minimizing in the Wasserstein sense the distance of  distributional data from respective marginals of
measure-valued linear and quadratic curves in $\PD$, namely,
\begin{equation}
\label{lin_measure_valued}\mathcal{G_{\rm Lin}}:=\{(\nu_t)_{t\in [0,1]}\subset \PD ~|~ \nu_t=((1-t) {x}_0+t {x}_1)_\#\pi, \pi\in \mathcal{P}_2(\X^2)\},\end{equation}
and
\begin{equation}
\label{quad_measure_valued}
\mathcal{G_{\rm Quad}}:=\{(\nu_t)_{t\in [0,1]} \subset \PD ~|~ \nu_t=(x_0+tx_1+t^2x_2)_\#\pi, \pi\in \mathcal{P}_2(\X^3)\}.
\end{equation}
We point out that any $(\nu_t)_{t\in [0,1]}$ in $\mathcal{G_{\rm Lin}}$, or $\mathcal{G_{\rm Quad}}$, is absolutely continuous~\cite[Theorem 1]{karimi2020statistical}, which amounts to the fact that the metric derivative~\cite{ambrosio2008gradient}
\[
|\nu'|(t):= \lim_{s\rightarrow t}{\frac{W_2(\nu_s, \nu_t)}{|s-t|}}\leq m(t)
\]
is bounded by some function $m(t) \in L^1(0,1)$ for almost all $t\in\left(0,1\right)$.
 
\subsection{Regression problems}

 Regression analysis seeks to model the relationship between variables, which in our case are probability measures. We consider time as the independent variable and, thereby, regression in the space of probability measures amounts to identifying a flow of one-time marginals which may capture possible underlying dynamics.

Thus, given a set of ``points'' $\{\mu_{t_i}\}_{i=1}^{N}\subset \PD$, we pose the regression problem
\begin{equation}
\label{eq:primal}
\inf_{\nu\in\mathcal G}\;\sum_{i=1}^{N}\lambda_i W_2^2(\nu_{t_i},\mu_{t_i}),
\end{equation}
where  $\mathcal G$ is either $\mathcal{G_{\rm Lin}}$ or $\mathcal{G_{\rm Quad}}$, and the ``weights'' $\lambda_i>0$ ($i=1,\cdots,N$) satisfy $\sum_{i=1}^N\lambda_i=1$. 
 
Linear measure-valued curves represent linear-in-time flows which advance an initial probability measure at $t=0$ to  another one at $t=1$, and generate correlations across the time interval. Conversely, these linear curves are specified by correlation of their end points, and therefore, problem \eqref{eq:primal} becomes one of minimizing over $\pi\in\mathcal{P}_2(\X^2)$ that represents the coupling between the marginals at $t=0$ and $t=1$. Specifically, \eqref{eq:primal} can be cast as 
\begin{equation}
\label{linear_prob}
\inf_{\pi\in \mathcal{P}_2(\X^2)} F_1(\pi):=\sum_{i=1}^{N}\lambda_i W_2^2(((1-t_i) {x}_0+t_i {x}_1)_\#\pi,\mu_{t_i}).
\end{equation}
In \eqref{linear_prob} we  assume $N\geq 3$ since, trivially, for $N=2$ any coupling between the two endpoints results in a zero cost.

\begin{remark}
It is important to contrast \eqref{linear_prob} with the geodesic regression problem \cite{jiang2011geometric,bigot2017geodesic,cazelles2018geodesic,wang2013linear} that seeks a geodesic in Wasserstein space to likewise approximate the distributional data $\mu_{t_i}$ ($i\in\{1,\ldots,N\}$). To this end, note that a curve $\nu_t=((1-t) {x}_0+t {x}_1)_\#\pi$ is a Wasserstein geodesic when $\pi$ is an {\em optimal} coupling between two marginals (typically, the end-point ones); the space of such optimal couplings is a strict subset of $\mathcal{P}_2(\X^2)$. Thus, the formulation \eqref{linear_prob}
is a relaxation of the geodesic regression in a way that may be seen as analogous to Kantorovich's relaxation of Monge's problem. Our motivation stems from the computational complexity of geodesic regression rooted in the fact that $F_1(\pi)$ is not displacement convex (see \cite[Section III]{karimi2020statistical}). In contrast, 
in the next section, we will see that \eqref{linear_prob} can be recast as a multi-marginal transport problem and solved efficiently using Sinkhorn's algorithm.
\end{remark}

Analogously, we define the regression problem for measure-valued quadratic curves by minimizing \eqref{eq:primal} over $(\nu_t)_{t\in [0,1]}\in \mathcal{G_{\rm Quad}}$, leading to
\begin{equation}
\label{quadratic_prob}
\inf_{\pi\in \mathcal{P}_2(\X^3)} F_2(\pi):=\sum_{i=1}^{N}\lambda_i W_2^2((x_0+tx_1+t^2x_2)_\#\pi,\mu_{t_i}).
\end{equation}
In this case, the hypothesis class consists of flows which are quadratic in time. It may represent distributions of inertial (mass) particles moving in the space according to quadratic functions in time under the influence of a conservative force field. As mentioned earlier, the regression formalism can be generalized to any other hypothesis classes, e.g. higher-order curves (cubic, quartic), sinusoids with variable amplitudes and frequencies, and so on. In the present work, however, we restrict our attention to linear and quadratic measure-valued curves.

The existence of minimizers is stated next.
\begin{proposition}
\label{solution_exist}
Problems \eqref{linear_prob} and \eqref{quadratic_prob} have minimizing solutions. 
\end{proposition}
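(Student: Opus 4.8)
The plan is to prove existence by the direct method of the calculus of variations, treating both problems in parallel and presenting the argument for $F_1$ on $\mathcal{P}_2(\X^2)$; the quadratic case over $\mathcal{P}_2(\X^3)$ is entirely analogous. Since each summand $\lambda_i W_2^2(\cdot,\mu_{t_i})$ is nonnegative, $F_1$ is bounded below by $0$, so the infimum is finite and I can fix a minimizing sequence $\{\pi_n\}\subset\mathcal{P}_2(\X^2)$ with $F_1(\pi_n)\to\inf F_1=:m$. The goal is then to extract a weak limit $\hat\pi$ of a subsequence of $\{\pi_n\}$ and to show $F_1(\hat\pi)\le m$, so that $\hat\pi$ is a minimizer.

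The crux of the argument --- and the step I expect to be the main obstacle --- is compactness, since $\mathcal{P}_2(\X^2)$ is not weakly compact and a minimizing sequence could a priori escape to infinity. I would rule this out by a uniform second-moment bound. Along the minimizing sequence $F_1(\pi_n)\le C$, so for each $i$ one has $W_2(\nu_{t_i}^{(n)},\mu_{t_i})\le (C/\lambda_i)^{1/2}$ for the time-marginals $\nu_{t_i}^{(n)}:=((1-t_i)x_0+t_i x_1)_\#\pi_n$; since $\int_\X\|x\|^2\,\dd\nu = W_2^2(\nu,\delta_0)$, the triangle inequality for $W_2$ together with the finiteness of the second moments of the fixed data $\mu_{t_i}$ yields a uniform bound on the second moment of each $\nu_{t_i}^{(n)}$. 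To transfer this bound to $\pi_n$ itself I would use that, for two distinct timestamps $t_j\neq t_k$, the linear map $(x_0,x_1)\mapsto((1-t_j)x_0+t_j x_1,\,(1-t_k)x_0+t_k x_1)$ is invertible (its determinant is $t_k-t_j\neq 0$); in the quadratic case the analogous map is governed by a Vandermonde matrix in three distinct timestamps, which is exactly why $N\ge 3$ is the natural hypothesis. Inverting this map bounds $\int_{\X^2}\|(x_0,x_1)\|^2\,\dd\pi_n$ uniformly in $n$, and Markov's inequality then gives tightness of $\{\pi_n\}$. By Prokhorov's theorem a subsequence converges weakly, $\pi_{n_k}\rightharpoonup\hat\pi$, and lower semicontinuity of the second moment under weak convergence shows $\hat\pi\in\mathcal{P}_2(\X^2)$.

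It then remains to establish weak lower semicontinuity of $F_1$. Because the parametrization map $T_{t_i}(x_0,x_1)=(1-t_i)x_0+t_i x_1$ is continuous, pushforward is weakly continuous, so $\pi_{n_k}\rightharpoonup\hat\pi$ implies $\nu_{t_i}^{(n_k)}\rightharpoonup\hat\nu_{t_i}:=(T_{t_i})_\#\hat\pi$ for each $i$. Since $W_2^2(\cdot,\mu_{t_i})$ is lower semicontinuous with respect to weak convergence (a standard property of the Wasserstein distance, see \cite{villani2008optimal,ambrosio2004gradient}), and a positively weighted finite sum of lower semicontinuous functions is lower semicontinuous, I obtain $F_1(\hat\pi)\le\liminf_k F_1(\pi_{n_k})=m$. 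Hence $\hat\pi$ attains the infimum. The quadratic problem \eqref{quadratic_prob} follows verbatim, replacing $\X^2$ by $\X^3$ and the two-timestamp invertibility by the Vandermonde argument above.
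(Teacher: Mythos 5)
Your proposal is correct and follows essentially the same route as the paper: the direct method with a minimizing sequence, tightness via uniform second-moment bounds, Prokhorov's theorem, and weak lower semicontinuity of $W_2^2(\cdot,\mu_{t_i})$. In fact you are more careful than the paper at the one delicate step---the paper simply asserts that $\left\{\int_{\X^2}\|x\|_2^2\,\dd\pi_n\right\}$ is bounded, whereas your invertible-linear-map (Vandermonde) argument, transferring the marginal moment bounds at distinct timestamps to a joint moment bound on $\pi_n$, supplies the justification the paper leaves implicit.
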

\begin{proof}
The proof follows from Proposition 2.3 in \cite{agueh2011barycenters}. For completeness, we detail the steps of the proof for \eqref{linear_prob}; the proof of \eqref{quadratic_prob} follows similarly.

Let $\left\{\pi_n\right\}_{n=1}^\infty$ be a minimizing sequence of \eqref{linear_prob}. Since the data $\mu_{t_i}\in\PD$ ($i\in\{1,\ldots,N\}$), the sequence $\left\{\int_{\X^2} \|x\|_2^2 \dd \pi_n\right\}_{n=1}^\infty$ remains bounded. This implies that $\left\{\pi_n\right\}_{n=1}^\infty$ is tight. Therefore, Prokhorov’s theorem guarantees the existence of a sub-sequence weakly converging to some $\pi^*\in \mathcal{P}_2(\X^2)$. The lower semi-continuity of 
Wasserstein distance shows that $F_1(\pi)$ in \eqref{linear_prob} is a lower semi-continuous functional. As a result, $F_1(\pi^*)\leq \liminf\limits_{n\rightarrow \infty} F_1(\pi_n)={\displaystyle \inf_{\pi} F_1(\pi)}$. This proves that \eqref{linear_prob} has a minimizer. 
\end{proof}

Our next proposition states that \eqref{linear_prob} and \eqref{quadratic_prob} behave well with respect to scaling time. It is stated for problem \eqref{linear_prob} and highlights the fact that changing the units of time does not affect the solution.

\begin{proposition}
\label{scallable_time}

Suppose for given $\{\mu_{t_i}\}_{i=1}^{N}\subset \PD$, with $\{t_i\}_{i=1}^{N} \subset \left[0,T\right]$, $\hat{\pi}^T\in \mathcal{P}_2(\X^2)$ is a minimizer of
\begin{equation}
\label{scaled_problem}
\inf_{\pi\in \mathcal{P}_2(\X^2)} \sum_{i=1}^{N}\lambda_i W_2^2(((T-t_i) {x}_0+t_i {x}_1)_\#\pi,\mu_{t_i}).
\end{equation}
Then, $\hat{\pi}^1:=(T{x_0},T{x_1})_\# \hat{\pi}^T$  is a minimizer of \eqref{linear_prob} for $\{\frac{t_i}{T}\}_{i=1}^{N} \subset \left[0,1\right]$.
\end{proposition}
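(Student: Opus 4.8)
The plan is to exhibit an explicit bijection between the feasible sets of the two optimization problems under which the two objective functionals coincide value-for-value; minimizers then correspond to minimizers automatically, and the claimed formula for $\hat{\pi}^1$ falls out immediately.

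First I would introduce the scaling map $\Phi:\mathcal{P}_2(\X^2)\to\mathcal{P}_2(\X^2)$ defined by $\Phi(\pi):=(Tx_0,Tx_1)_\#\pi$, i.e.\ the pushforward under the linear map $(x_0,x_1)\mapsto(Tx_0,Tx_1)$ on $\X^2$. Since $T>0$, this underlying map is a linear isomorphism of $\X^2$ with inverse $(y_0,y_1)\mapsto(y_0/T,\,y_1/T)$, so $\Phi$ is a bijection of $\mathcal{P}_2(\X^2)$ onto itself; the second moments merely scale by $T^2$ and hence remain finite, so $\Phi$ indeed maps $\mathcal{P}_2(\X^2)$ into itself. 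In the notation of the statement, $\hat{\pi}^1=\Phi(\hat{\pi}^T)$.

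The core computation is to relate the one-time marginals across the two time parametrizations. For every $\pi\in\mathcal{P}_2(\X^2)$ and each index $i$, set $s_i:=t_i/T\in[0,1]$. Using the composition rule for pushforwards, the time-$s_i$ marginal of the linear curve generated by $\Phi(\pi)$ is $((1-s_i)y_0+s_i y_1)_\#\Phi(\pi)=\big((1-s_i)(Tx_0)+s_i(Tx_1)\big)_\#\pi$, and substituting $s_i=t_i/T$ collapses the inner affine map to $(T-t_i)x_0+t_i x_1$. Hence this marginal equals $((T-t_i)x_0+t_i x_1)_\#\pi$, which is precisely the marginal appearing at timestamp $t_i$ in the scaled problem \eqref{scaled_problem}.

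To conclude, I would note that the data measure attached to timestamp $t_i$ in \eqref{scaled_problem} is the very same measure attached to the rescaled timestamp $s_i=t_i/T$ in \eqref{linear_prob} (only the time label changes under $t\mapsto t/T$). Combining this with the marginal identity gives $F_1(\Phi(\pi))=\sum_{i=1}^{N}\lambda_i W_2^2(((T-t_i)x_0+t_i x_1)_\#\pi,\mu_{t_i})$, which is exactly the objective of \eqref{scaled_problem} evaluated at $\pi$. Thus $\Phi$ transports the objective of the scaled problem onto that of \eqref{linear_prob} while preserving values; being a bijection, it therefore maps the set of minimizers onto the set of minimizers, and in particular $\hat{\pi}^1=\Phi(\hat{\pi}^T)$ minimizes \eqref{linear_prob}. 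The only real obstacle is the bookkeeping in the pushforward composition — tracking which coordinate is scaled and verifying that the $W_2^2$ cost need not itself be rescaled, since we have arranged the marginals to coincide exactly rather than merely up to a scaling; once the marginals match, the Wasserstein terms are literally identical and no curvature or metric-scaling subtlety enters.
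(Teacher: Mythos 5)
Your proof is correct, and it takes a cleaner route than the paper's. The paper proceeds term by term: for each $i$ it invokes Proposition 7.3.1 of \cite{ambrosio2008gradient} to produce a coupling $\hat{\eta}_i\in\Pi(\hat{\pi}^T,\mu_{t_i})$ whose pushforward under $((T-t_i)x_0+t_ix_1,y)$ is an optimal plan, and then performs the change of variables $(x_0,x_1)\mapsto(Tx_0,Tx_1)$ \emph{inside} the transport integral to recognize the cost of the rescaled problem. You bypass the coupling machinery entirely: the pushforward composition identity $((1-s_i)y_0+s_iy_1)_\#\Phi(\pi)=((T-t_i)x_0+t_ix_1)_\#\pi$ with $s_i=t_i/T$ shows the one-time marginals of the two problems are \emph{equal as measures}, so the corresponding $W_2^2$ terms are literally identical and no optimal plan need be exhibited. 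Your approach also buys something the paper leaves implicit: since $\Phi$ is a bijection of $\mathcal{P}_2(\X^2)$ (with inverse the pushforward under $(y_0,y_1)\mapsto(y_0/T,y_1/T)$) that preserves objective values, you get the two-sided correspondence of minimizers directly, whereas the paper's displayed computation only evaluates the objective at the particular minimizer $\hat{\pi}^T$ and concludes with ``It follows that $\hat{\pi}^1=(Tx_0,Tx_1)_\#\hat{\pi}^T$'' --- a step that tacitly requires exactly the bijection argument you spell out (a competitor for \eqref{linear_prob} must pull back to a competitor for \eqref{scaled_problem}, else minimality of $\hat{\pi}^1$ does not follow). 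The one point worth stating explicitly in your write-up is that $\Phi$ preserves finiteness of second moments in both directions, which you do note; with that, your argument is complete and, if anything, more self-contained than the original.
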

\begin{proof}
For each term in \eqref{scaled_problem}, let $\hat{\eta}_i(x_0,x_1,y) \in \Pi(\hat{\pi}^T,\mu_{t_i})$ be such that $((T-t_i) {x}_0+t_i {x}_1,y)_\#\hat{\eta}_i$ is an optimal coupling between its marginals. Such $\hat{\eta}_i$ exists due to Proposition 7.3.1 in \cite{ambrosio2008gradient}.
Using \eqref{push_forward}, we have
\begin{align*}
W_2^2(((T-t_i) {x}_0+t_i {x}_1)_\#\hat{\pi}^T,\mu_{t_i})&=\int_{\X^3} \|(T-t_i)x_0+t_ix_1-y\|_2^2\dd \hat{\eta}_i(x_0,x_1,y)\\
&\hspace*{-20pt}=\int_{\X^3} \|(1-\frac{t_i}{T}) (T{x_0})+\frac{t_i}{T}(Tx_1)-y\|_2^2\dd \hat{\eta}_i(x_0,x_1,y)\\
&\hspace*{-20pt}=\int_{\X^3} \|(1-\frac{t_i}{T}) {x_0}+\frac{t_i}{T}x_1-y\|_2^2\dd \left\{(Tx_0,Tx_1,y)_\#\hat{\eta}_i\right\}.
\end{align*}
It follows that $\hat{\pi}^1=(T{x_0},T{x_1})_\# \hat{\pi}^T$.
\end{proof}

The proposition above shows the regression problems behave nicely with respect to time scaling and thus, without loss of generality, we can always assume the timestamps normalized to lie within the interval $\left[0,1\right]$. Analogous steps can be carried out to show that $\hat{\pi}^1=({x_0},T{x_1},T^2{x_2})_\# \hat{\pi}^T$ is a minimizer of \eqref{quadratic_prob} for $\{\frac{t_i}{T}\}_{i=1}^{N} \subset \left[0,1\right]$ when $\hat{\pi}^T$ is a minimizer for a corresponding problem with timestamps over a window $[0,T]$ with $T>1$.

\section{Multi-marginal formulation} \label{sec:multimarginal}

In this section, we show that measure-valued regression can be recast as a  multi-marginal optimal transportation problem. Numerically, this is extremely beneficial when combined with entropy regularization as described in the next section. First, we provide the result for measure-valued quadratic curves in the following. 

\begin{theorem}
\label{theorem:mm} 
Problem (\ref{quadratic_prob}) can be recast as
\begin{align}
\label{main_eq0}
 \inf_{\substack{\pi}} F_2(\pi)= &\inf_{\substack{\gamma}}
\int_{\X^{N+3}} \sum_{i=1}^{N}\lambda_i\nonumber \|x_0+t_ix_1+t_i^2x_2-y_i\|_2^2\dd\gamma(x_0,x_1,x_2,y_1,\cdots,y_N) \nonumber\\
& {  \quad {\rm s.t.}\quad {y_i}_\#\gamma=\mu_{t_i}, \forall i=1,\cdots,N},  
\end{align}
with $\gamma\in \mathcal{P}_2(\X^{(N+3)})$,
 $\pi\in \mathcal{P}_2(\X^{3})$. Moreover, a minimizer of the right-hand side ($\hat{\gamma}$) exists
 and  $\hat{\pi}=(x_0,x_1,x_2)_\#\hat{\gamma}$ is a minimizer of left-hand side.
\end{theorem}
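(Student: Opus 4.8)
The plan is to prove equality of the two infima by establishing the two matching inequalities, and then to transfer the minimizers. Throughout, write $P_i(x_0,x_1,x_2):=x_0+t_ix_1+t_i^2x_2$ for the evaluation map at time $t_i$, so that the one-time marginal associated with a law $\pi\in\mathcal{P}_2(\X^3)$ is $\nu_{t_i}=(P_i)_\#\pi$, and $F_2(\pi)=\sum_{i=1}^N\lambda_i W_2^2((P_i)_\#\pi,\mu_{t_i})$. The easy inequality is $\inf_\gamma\geq\inf_\pi F_2$: given any feasible multi-coupling $\gamma$ (so $(y_i)_\#\gamma=\mu_{t_i}$), set $\pi:=(x_0,x_1,x_2)_\#\gamma$. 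For each $i$ the pair-projection $(P_i(x_0,x_1,x_2),y_i)_\#\gamma$ is a coupling of $\nu_{t_i}=(P_i)_\#\pi$ and $\mu_{t_i}$, so its transport cost dominates $W_2^2(\nu_{t_i},\mu_{t_i})$; weighting by $\lambda_i$ and summing shows the objective of $\gamma$ is at least $F_2(\pi)\geq\inf_\pi F_2$, and taking the infimum over $\gamma$ gives the claim.

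The reverse inequality $\inf_\gamma\leq\inf_\pi F_2$ is the main technical step and is where Lemma \ref{gluing} enters. Fix $\pi\in\mathcal{P}_2(\X^3)$. For each $i$ choose an optimal coupling $\sigma_i\in\Pi(\nu_{t_i},\mu_{t_i})$ and disintegrate it as $\sigma_i=\nu_{t_i}(\dd z)\,K_i(z,\dd y)$ against its first marginal. Pulling the kernel back through $P_i$ yields a coupling $\tilde\sigma_i$ of $\pi$ and $\mu_{t_i}$ on $\X^3\times\X$, namely the law of $((x_0,x_1,x_2),y_i)$ with $(x_0,x_1,x_2)\sim\pi$ and $y_i\mid(x_0,x_1,x_2)\sim K_i(P_i(x_0,x_1,x_2),\cdot)$, so that $(P_i,y_i)_\#\tilde\sigma_i=\sigma_i$. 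All $N$ couplings $\tilde\sigma_i$ share the common $\X^3$-marginal $\pi$, so applying the gluing lemma iteratively (it holds verbatim with the Polish space $\X^3$ in the role of the shared coordinate) produces a single $\gamma\in\mathcal{P}_2(\X^{N+3})$ with $(x_0,x_1,x_2)_\#\gamma=\pi$ and $(\text{$\X^3$-block},y_i)_\#\gamma=\tilde\sigma_i$ for each $i$. Then $(y_i)_\#\gamma=\mu_{t_i}$, so $\gamma$ is feasible, and since $(P_i,y_i)_\#\gamma=\sigma_i$ its objective equals $\sum_{i=1}^N\lambda_i\int\|z-y\|_2^2\,\dd\sigma_i=F_2(\pi)$. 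Hence $\inf_\gamma\leq F_2(\pi)$; taking the infimum over $\pi$ completes the equality of infima.

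It remains to settle the minimizers. Proposition \ref{solution_exist} furnishes a minimizer $\hat\pi$ of $F_2$; applying the construction of the second paragraph to $\hat\pi$ gives a feasible $\hat\gamma$ with objective equal to $F_2(\hat\pi)=\inf_\pi F_2=\inf_\gamma$, so $\hat\gamma$ attains the multi-marginal infimum and existence of a minimizer on the right-hand side is established. Conversely, for any multi-marginal minimizer $\hat\gamma$, the easy inequality applied to its projection $\hat\pi:=(x_0,x_1,x_2)_\#\hat\gamma$ shows that $F_2(\hat\pi)$ is at most the minimal objective value $\inf_\gamma=\inf_\pi F_2$, forcing $F_2(\hat\pi)=\inf_\pi F_2$. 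Thus $\hat\pi$ minimizes $F_2$, which is exactly the asserted statement.

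The main obstacle I anticipate is the reverse inequality: one must verify that the pulled-back kernels $(x_0,x_1,x_2)\mapsto K_i(P_i(x_0,x_1,x_2),\cdot)$ are measurable and that the iterated gluing simultaneously preserves all the required pairwise marginals (Lemma \ref{gluing} is phrased for three measures, so reducing the $N$-fold glue to repeated applications along the common $\X^3$-marginal needs a short inductive argument). The conditional independence of the $y_i$ given $(x_0,x_1,x_2)$ is harmless here precisely because the cost is separable across $i$, so the glued law is constrained only through its pairwise marginals and no joint structure among the $y_i$ is imposed.
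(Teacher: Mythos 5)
Your proposal is correct and follows essentially the same route as the paper: both directions hinge on lifting each optimal coupling $\sigma_i\in\Pi(\nu_{t_i},\mu_{t_i})$ through the evaluation map $P_i$ to a coupling of $\pi$ and $\mu_{t_i}$ (the paper does this via Lemma \ref{gluing} applied to $\hat{\Lambda}_t$ together with invertibility of $(x_0+tx_1+t^2x_2,x_1,x_2,y)$, which is exactly your kernel pull-back in different clothing), and then assembling the feasible $\hat{\gamma}$ as the conditionally independent product of the disintegrated kernels over the shared $\X^3$-block, with minimizers transferred via Proposition \ref{solution_exist}. Your iterated gluing in the second step coincides with the paper's explicit product construction in \eqref{gamma_construction}, so the two proofs are the same argument in different notation.
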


\begin{proof}
First, suppose $\pi\in \mathcal{P}_2(\X^{3})$ and $\mu_t\in \PD$ are such that $\nu_t=(x_0+tx_1+t^2x_2)_\#\pi,~t\in[0,1]$ and $\eta_t\in \Pi(\pi,\mu_t)$, namely, a coupling between $\pi$ and $\mu_t$. Define
\begin{equation*}
    W_{\eta_t}(\nu_{t},\mu_t):=\int_{\X^{4}} \|x_0+tx_1+t^2x_2-y\|_2^2\dd\eta_t(x_0,x_1,x_2,y).
\end{equation*}
for which we have $W_{2}^{2}(\nu_{t},\mu_t) \leq W_{\eta_t}^2(\nu_{t},\mu_t),~\forall t\in [0,1].$ We can show the tightness of this inequality for some $\hat{\eta}_t$, namely, $W_{2}^{2}(\nu_{t},\mu_t) =
W_{\hat{\eta}_t}^2(\nu_{t},\mu_t)$.
To do so, we assume that $\hat{\Lambda}_t$ is an optimal coupling between $\nu_{t}$ and $\mu_t$. Also, we define $\rho_t\in \mathcal{P}_2(\X^4)$ as
\[\rho_t:=(x_0+tx_1+t^2x_2,x_1,x_2,y)_\#\hat{\eta}_t.\]
The existence of $\hat{\eta}_t$ amounts to finding the probability measure $\rho_t$ which fulfils the following properties:
\begin{equation*}
(z_1,z_4)_\#\rho_t=\hat{\Lambda}_t \quad \text{and}\quad   (z_1,z_2,z_3)_\#\rho_t=(x_0+tx_1+t^2x_2,x_1,x_2)_\#\pi
\end{equation*}
where $(z_1,z_4)_\#\rho_t$ denotes the projection of $\rho_t$ onto the product space of first and last coordinates, and $(z_1,z_2,z_3)_\#\rho_t$ is its projection onto the product space of the first three coordinates. 
Since the projections of $\hat{\Lambda}_t$ and $(x_0+tx_1+t^2x_2,x_1,x_2)_\#\pi$ onto their first coordinates are consistent, i.e., equal to $\nu_{t}$, by the application of Gluing Lemma (Lemma \ref{gluing}), we conclude the existence of $\rho_t$. Moreover, as the map $(x_0+tx_1+t^2x_2,x_1,x_2,y)$ is invertible, $\hat{\eta}_t$ exists as well.

Using the disintegration theorem~\cite[Theorem 5.3.1]{ambrosio2008gradient}, we can extend this result to a family of measures $\left\{\mu_{t_i}\right\}_{i=1}^N \subset \PD$ to show that for given $\pi\in \mathcal{P}_2(\X^{3})$,
\begin{equation*}
\sum_{i=1}^N \lambda_i W_{2}^{2}(\nu_{t_i},\mu_{t_i}) =\min_{\substack{\gamma\in \mathcal{P}_2(\X^{N+3})\\
{y_i}_\#\gamma=\mu_{t_i}\\
(x_0,x_1,x_2)_\#\gamma=\pi}}
\sum_{i=1}^N \lambda_i W_{\gamma}^2(\nu_{t_i},\mu_{t_i}) \nonumber \\
\end{equation*}
A minimizer of problem above ($\hat{\gamma}$) can be constructed as
\begin{equation}
\label{gamma_construction}
\dd\hat{\gamma}(x_0,x_1,,x_2,y_1,\ldots,y_N)=\dd(\hat{\eta}^{x_0,x_1,x_2}_{t_1} \times \ldots \times \hat{\eta}^{x_0,x_1,x_2}_{t_N}) (y_1,\ldots,y_N) \dd\pi(x_0,x_1,x_2).
\end{equation}
In \eqref{gamma_construction}, the disintegration of each measure $\hat{\eta}_{t_i}$ is written as $\dd\hat{\eta}_{t_i}(x_0,x_1,x_2,y_i)=\dd \hat{\eta}^{x_0,x_1,x_2}_{t_i}(y_i)\dd\pi(x_0,x_1,x_2)$.

According to Proposition \ref{solution_exist}, the minimizer $\hat{\pi}$ of left-hand side in \eqref{main_eq0}  exists. Thereby, using \eqref{gamma_construction}, we can obtain a minimizer of the multi-marginal formulation in \eqref{main_eq0} ($\hat{\gamma}$). This proves existence of a solution for our multi-marginal formulation and also,  $\hat{\pi}=(x_0,x_1,x_2)_\#\hat{\gamma}$. The proof is complete. 
\end{proof}

Similarly, a multi-marginal formulation for \eqref{linear_prob} is provided in the following corollary. The proof is skipped as it resembles that of Theorem \ref{theorem:mm}.
\begin{corollary}
Problem (\ref{linear_prob}) can be recast as
\begin{align}
\label{main_eq}
 \inf_{\substack{\pi}} F_1(\pi)= &\inf_{\substack \gamma}
\int_{\X^{N+2}} \sum_{i=1}^{N}\lambda_i\nonumber \|(1-t_i)x_0+t_ix_1-y_i\|_2^2\dd\gamma(x_0,x_1,y_1,\cdots,y_N) \nonumber\\
& { \quad  {\rm s.t.}\quad {y_i}_\#\gamma=\mu_{t_i}, \forall i=1,\cdots,N},  
\end{align}
with $\gamma\in \mathcal{P}_2(\X^{(N+2)})$,
 $\pi\in \mathcal{P}_2(\X^{2})$. Moreover, a minimizer of the right-hand side ($\hat{\gamma}$) exists
 and  $\hat{\pi}=(x_0,x_1)_\#\hat{\gamma}$ where $\hat{\pi}$ is a minimizer of left-hand side.
\end{corollary}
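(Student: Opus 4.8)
The plan is to transcribe, essentially verbatim, the three-step argument of Theorem \ref{theorem:mm} to the linear setting, replacing the parameterization $x_0+t x_1+t^2 x_2$ by $(1-t)x_0+t x_1$ and the product space $\X^{N+3}$ by $\X^{N+2}$. First I would fix $\pi\in\mathcal{P}_2(\X^2)$, write $\nu_t=((1-t)x_0+t x_1)_\#\pi$, and for any coupling $\eta_t\in\Pi(\pi,\mu_t)$ define
\[
W_{\eta_t}(\nu_t,\mu_t):=\int_{\X^3}\|(1-t)x_0+t x_1-y\|_2^2\,\dd\eta_t(x_0,x_1,y).
\]
The elementary inequality $W_2^2(\nu_t,\mu_t)\le W_{\eta_t}^2(\nu_t,\mu_t)$ holds for every such $\eta_t$, since $((1-t)x_0+t x_1,\,y)_\#\eta_t$ is an (a priori suboptimal) coupling of $\nu_t$ and $\mu_t$.

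The crux is to show this bound is attained, i.e.\ to produce $\hat\eta_t\in\Pi(\pi,\mu_t)$ with $W_{\hat\eta_t}^2=W_2^2$. For this I would take $\hat\Lambda_t$ to be an optimal coupling of $\nu_t$ and $\mu_t$ and seek $\rho_t\in\mathcal{P}_2(\X^3)$ with $(z_1,z_3)_\#\rho_t=\hat\Lambda_t$ and $(z_1,z_2)_\#\rho_t=((1-t)x_0+t x_1,\,x_1)_\#\pi$. These two prescribed couplings share the common first marginal $\nu_t$, so the Gluing Lemma (Lemma \ref{gluing}) furnishes $\rho_t$; pulling back through the map $((1-t)x_0+t x_1,\,x_1,\,y)$ then yields the desired $\hat\eta_t$.

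The one genuinely new point relative to the quadratic case is invertibility of this gluing map, and I expect it to be the main (albeit minor) obstacle. In the quadratic parameterization the map $(x_0+t x_1+t^2 x_2,\,x_1,\,x_2,\,y)$ recovers $x_0$ from the remaining coordinates for \emph{every} $t$, hence is invertible throughout $[0,1]$. Here, by contrast, $((1-t)x_0+t x_1,\,x_1,\,y)$ inverts to $x_0=(z_1-t z_2)/(1-t)$, which degenerates at $t=1$. I would handle this by observing that for any fixed $t_i<1$ the map is a genuine linear bijection, while a timestamp $t_i=1$ can be treated symmetrically through the equivalent map $(x_0,\,(1-t)x_0+t x_1,\,y)$ (invertible for $t\neq 0$), or, uniformly in $t$, by passing to the velocity representation $\nu_t=(x_0+t v)_\#\pi$ in which $(x_0+t v,\,v,\,y)$ is invertible for all $t\in[0,1]$; either device restores the argument without further change.

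Finally, with the single-marginal identity in hand, I would invoke the disintegration theorem \cite[Theorem 5.3.1]{ambrosio2008gradient} exactly as in \eqref{gamma_construction}, building for a fixed $\pi$ an optimal $\gamma$ as the product of the disintegrated kernels $\hat\eta_{t_i}^{x_0,x_1}$ against $\pi$, so that
\[
\sum_{i=1}^N\lambda_i W_2^2(\nu_{t_i},\mu_{t_i})
=\min_{\substack{\gamma\in\mathcal{P}_2(\X^{N+2})\\ {y_i}_\#\gamma=\mu_{t_i}\\ (x_0,x_1)_\#\gamma=\pi}}
\sum_{i=1}^N\lambda_i W_\gamma^2(\nu_{t_i},\mu_{t_i}).
\]
Taking the infimum over $\pi$ on both sides and noting that the constraint $(x_0,x_1)_\#\gamma=\pi$ becomes vacuous once $\pi$ ranges freely, the two infima in \eqref{main_eq} coincide. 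Existence of a minimizer $\hat\pi$ from Proposition \ref{solution_exist}, together with the above construction, then produces a minimizer $\hat\gamma$ of the right-hand side satisfying $\hat\pi=(x_0,x_1)_\#\hat\gamma$, which completes the argument.
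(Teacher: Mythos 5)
Your proposal is correct and takes essentially the paper's intended route: the paper omits this proof, stating only that it ``resembles that of Theorem~\ref{theorem:mm}'', and your transcription (inequality via suboptimal couplings, gluing against an optimal $\hat\Lambda_t$, pullback through the parameterization map, then disintegration and the product construction of $\hat\gamma$) is exactly that argument. Your one added observation---that the map $((1-t)x_0+tx_1,\,x_1,\,y)$ fails to be invertible at $t=1$, repaired either by the symmetric map or by the velocity representation---is a legitimate small fix to a degeneracy the quadratic-case proof does not encounter and the paper's omitted proof glosses over.
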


The following proposition shows the  consistency of our method with regression in Euclidean space when the target distributions are Dirac measures. 
\begin{proposition}
\label{Euclid_geod_consistency}

If all the observations are Dirac measures, i.e., $\mu_{t_i}=\delta_{{v}_i},~i=1,\cdots,N$ where $\left\{v_i\right\}_{i=1}^N\subset \X$, we have 
\[
\inf_{\substack{\pi\in \mathcal{P}_2(\X^2)}} F_1(\pi)=
\inf_{\substack{x_0,x_1\in \X}} \sum_{i=1}^{N}\lambda_i \|(1-t_i)x_0+t_ix_1-{v}_i\|_2^2,\]
\[
\inf_{\substack{\pi\in \mathcal{P}_2(\X^3)}} F_2(\pi)=
\inf_{\substack{x_0,x_1,x_2\in \X}} \sum_{i=1}^{N}\lambda_i \|x_0+t_ix_1+t_i^2x_2-{v}_i\|_2^2.\]
\end{proposition}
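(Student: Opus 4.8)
The plan is to exploit the fact that when the target is a Dirac mass, the Wasserstein distance collapses to a simple second-moment integral, which renders the objective $F_1$ (and likewise $F_2$) \emph{linear} in the coupling $\pi$. A nonnegative linear functional integrated against a probability measure is minimized by a point mass, and this yields the reduction to the Euclidean least-squares problem.

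First I would record the elementary identity that for any $\nu\in\PD$ and $v\in\X$, the set $\Pi(\nu,\delta_v)$ consists only of the product measure $\nu\otimes\delta_v$, since the constraint that the second marginal equal $\delta_v$ forces all mass onto $\{y=v\}$; hence
\[
W_2^2(\nu,\delta_v)=\int_\X \|x-v\|_2^2\,\dd\nu(x).
\]
Applying this with $\nu=\nu_{t_i}=((1-t_i)x_0+t_ix_1)_\#\pi$ and using the push-forward formula \eqref{push_forward}, each term becomes
\[
W_2^2(\nu_{t_i},\delta_{v_i})=\int_{\X^2}\|(1-t_i)x_0+t_ix_1-v_i\|_2^2\,\dd\pi(x_0,x_1).
\]
Summing with the weights $\lambda_i$ and setting $g(x_0,x_1):=\sum_{i=1}^N\lambda_i\|(1-t_i)x_0+t_ix_1-v_i\|_2^2$, I obtain the representation $F_1(\pi)=\int_{\X^2}g\,\dd\pi$.

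Next I would establish two matching inequalities. For the lower bound, since $g\geq0$ and $\pi$ is a probability measure, $\int_{\X^2}g\,\dd\pi\geq\inf_{(x_0,x_1)}g$, whence $\inf_\pi F_1(\pi)\geq\inf_{(x_0,x_1)}g$. For the upper bound, I note that $g$ is a least-squares quadratic in $(x_0,x_1)\in\R^{2d}$ (stack the maps $(x_0,x_1)\mapsto(1-t_i)x_0+t_ix_1$ into a single linear map), so its infimum is attained at some $(x_0^*,x_1^*)$ by the normal equations; evaluating $F_1$ at the point mass $\pi=\delta_{(x_0^*,x_1^*)}\in\mathcal{P}_2(\X^2)$ gives $F_1(\pi)=g(x_0^*,x_1^*)=\inf g$, so $\inf_\pi F_1(\pi)\leq\inf g$. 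Combining the two inequalities yields the claimed equality. The quadratic case is identical, replacing the push-forward map by $(x_0+t_ix_1+t_i^2x_2)$ and working over $\pi\in\mathcal{P}_2(\X^3)$.

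I do not anticipate a serious obstacle: the argument is short once the Dirac simplification is in place. The only points requiring care are verifying that the \emph{unique} admissible coupling against a Dirac mass is the product measure (so that the first identity holds with equality, not merely as the generic inequality $W_2^2\le W_{\eta}^2$ used in the proof of Theorem \ref{theorem:mm}), and confirming that the point mass used in the upper bound lies in $\mathcal{P}_2$ --- both immediate. Strictly, attainment of $\inf g$ is not even essential; one could instead evaluate $F_1$ at point masses along a minimizing sequence $(x_0^{(k)},x_1^{(k)})$, but here attainment holds by the standard least-squares theory.
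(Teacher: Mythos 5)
Your proof is correct, and it takes a more direct route than the paper does. The paper itself offers no inline argument --- it defers to Proposition~7 of the cited reference \cite{karimi2020statistical} --- and the argument the authors have in that line of work runs through the multi-marginal reformulation \eqref{main_eq}: one first observes that the Dirac constraints ${y_i}_\#\gamma=\delta_{v_i}$ force any admissible multi-coupling to factor as $\gamma=\pi\otimes\delta_{v_1}\otimes\cdots\otimes\delta_{v_N}$, reduces the problem to $\inf_\pi\int_{\X^2}g\,\dd\pi$, and then closes with the same two-inequality sandwich you use (lower bound by pulling the pointwise infimum inside the integral, upper bound by testing with a product of Diracs $\delta_{w_0}\otimes\delta_{w_1}$). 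You bypass the multi-marginal machinery entirely by exploiting the elementary fact that $\Pi(\nu,\delta_v)=\{\nu\otimes\delta_v\}$, so that $W_2^2(\nu_{t_i},\delta_{v_i})$ collapses to a second-moment integral and $F_1(\pi)=\int_{\X^2}g\,\dd\pi$ directly, with no appeal to Theorem~\ref{theorem:mm}, the gluing lemma, or disintegration. What your approach buys is self-containedness and logical independence: the proposition then holds without first establishing the multi-marginal equivalence, which is attractive since in the paper's narrative this proposition is the motivating consistency check. What the paper's route buys is uniformity: once \eqref{main_eq} is in place, the Dirac case is a one-line specialization, and the factored form of $\gamma$ makes explicit how the multi-coupling degenerates. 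Two small points you handled well and that are genuinely needed: the identification of $\Pi(\nu,\delta_v)$ as a singleton (so the first identity is an equality, not just the inequality $W_2^2\le W_\eta^2$ used in Theorem~\ref{theorem:mm}), and the observation that only attainment of $\inf g$ (standard for least squares), not uniqueness, is needed for the upper bound --- indeed uniqueness can fail when too few distinct timestamps are present, whereas attainment never does.
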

\begin{proof}
See \cite[Proposition 7]{karimi2020statistical} for the proof. 
\end{proof}

In the original formulation of multi-marginal optimal transport, constraints are typically given on all
marginals. However, in \eqref{main_eq0} and \eqref{main_eq}, constraints are only imposed on a subset of marginals of multi-coupling $\gamma$. We show that these problems can be written in an original formalism of multi-marginal transportation.  The following proposition provides this result for linear curves; similar argument holds for quadratic curves. 
\begin{proposition}
\label{original_mm}
For every $y=(y_1,\cdots,y_N) \in \X^N$ define 
\begin{equation*}
    (\hat{x}_0(y),\hat{x}_1(y))=\underset{(x_0,x_1)\in \X^2}{\operatorname{\rm arg~ min}} \sum_{i=1}^N\lambda_i\|(1-t_i)x_0+t_ix_1-y_i\|_2^2 
\end{equation*}
which is a well-defined map from $\X^N$ to $\X^2$ (since the linear regression in Euclidean space has a unique solution in a closed form). 
Then, we have
\begin{align}
\label{second_mmf}
 \inf_{\substack {\pi}} F_1(\pi)= &\inf_{\substack{\gamma'\in\mathcal{P}_2(\X^N)}}
\int_{\X^N} \sum_{i=1}^{N}\lambda_i\nonumber \|(1-t_i)\hat{x}_0(y)+t_i\hat{x}_1(y)-y_i\|_2^2\dd\gamma'(y_1,\cdots,y_N) \nonumber\\
&  \quad {\rm s.t.}\quad {y_i}_\#\gamma'=\mu_{t_i}, \forall i=1,\cdots,N,  
\end{align}
and also, $\hat{\pi}=(\hat{x}_0(y),\hat{x}_1(y))_\#\hat{\gamma}'$ where $\hat{\pi}$ and $\hat{\gamma}'$ are  minimizers of the left- and right-hand sides, respectively. 
\end{proposition}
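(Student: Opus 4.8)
The plan is to start from the multi-marginal reformulation \eqref{main_eq} of $F_1$ furnished by the Corollary, in which the constraints fix only the $y_i$-marginals while the $(x_0,x_1)$-marginal of $\gamma$ is left free. The key observation is that, for each fixed $y=(y_1,\dots,y_N)$, the integrand can be minimized pointwise over $(x_0,x_1)$, and this inner minimization is exactly the Euclidean least-squares problem defining $(\hat x_0(y),\hat x_1(y))$. Hence I would ``integrate out'' the free coordinates and reduce the multi-coupling on $\X^{N+2}$ to one on $\X^N$. Before doing so, I would record that the selection map $y\mapsto(\hat x_0(y),\hat x_1(y))$ is not merely well defined but in fact \emph{linear} in $y$: the normal equations give $(\hat x_0,\hat x_1)=(A^\top\Lambda A)^{-1}A^\top\Lambda\,y$ acting coordinatewise, with $A$ the design matrix of rows $[(1-t_i),t_i]$ and $\Lambda=\diag(\lambda_i)$. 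In particular it is continuous, hence Borel measurable, and carries $\mathcal P_2$ into $\mathcal P_2$.

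For the inequality $\inf_\pi F_1(\pi)\ge(\text{right-hand side})$, I would take any $\gamma$ feasible for \eqref{main_eq} and use that, by the very definition of $(\hat x_0,\hat x_1)$,
\[
\sum_{i=1}^N\lambda_i\|(1-t_i)x_0+t_ix_1-y_i\|_2^2\;\ge\;\sum_{i=1}^N\lambda_i\|(1-t_i)\hat x_0(y)+t_i\hat x_1(y)-y_i\|_2^2
\]
for every $(x_0,x_1)$. Integrating against $\gamma$ and noting that the right-hand side depends on $y$ only, its integral against $\gamma$ equals its integral against the $y$-marginal $\gamma':=(y_1,\dots,y_N)_\#\gamma$, which is feasible for \eqref{second_mmf} since ${y_i}_\#\gamma'={y_i}_\#\gamma=\mu_{t_i}$. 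Taking the infimum over $\gamma$ then yields the stated bound.

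For the reverse inequality, given any $\gamma'$ feasible for \eqref{second_mmf} I would construct a feasible competitor for \eqref{main_eq} via the pushforward
\[
\gamma:=(\hat x_0(y),\hat x_1(y),y_1,\dots,y_N)_\#\gamma'.
\]
Since this map is the identity on the $y$-coordinates, $\gamma$ inherits the marginal constraints ${y_i}_\#\gamma=\mu_{t_i}$, and by the change-of-variables formula \eqref{push_forward} its objective value in \eqref{main_eq} equals exactly the objective of $\gamma'$ in \eqref{second_mmf}; linearity of the selection guarantees $\gamma\in\mathcal P_2(\X^{N+2})$. Taking infima gives the matching bound, so the two optimal values coincide.

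Finally, for the correspondence of minimizers I would note that a minimizer $\hat\gamma'$ of \eqref{second_mmf} exists by the same tightness/Prokhorov/lower-semicontinuity argument as in Proposition \ref{solution_exist}: the admissible set with prescribed marginals $\mu_{t_i}$ is tight and the cost is lower semicontinuous. Pushing $\hat\gamma'$ forward as above produces an optimal $\hat\gamma$ for \eqref{main_eq}, whence by the Corollary $\hat\pi=(x_0,x_1)_\#\hat\gamma=(\hat x_0(y),\hat x_1(y))_\#\hat\gamma'$ is a minimizer of $F_1$. I expect the only genuinely delicate point to be the measurable-selection step — ensuring $y\mapsto(\hat x_0(y),\hat x_1(y))$ is Borel and second-moment preserving so that the pushforward construction is legitimate — but this is disarmed at the outset by the closed-form linearity of least squares, so what remains is the routine two-inequality sandwich.
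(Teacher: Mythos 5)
Your proof is correct and follows essentially the same route as the paper's: bounding the integrand pointwise by its infimum over $(x_0,x_1)$, noting the bound depends only on $y$, and passing to the marginal $\gamma'=(y_1,\dots,y_N)_\#\gamma$. You are in fact somewhat more complete than the paper, which records only this one inequality and declares the result ``straightforward,'' whereas you make explicit the reverse direction via the lift $\gamma:=(\hat{x}_0(y),\hat{x}_1(y),y_1,\dots,y_N)_\#\gamma'$, the linearity and measurability of the least-squares selection map, and the existence of $\hat{\gamma}'$.
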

\begin{proof}
The proof is straightforward by noticing that for any $\gamma\in\mathcal{P}_2(\X^{N+2})$ which respects all the constraints on the marginals, we have
\begin{align*}
&\int_{\X^{N+2}} \sum_{i=1}^{N}\lambda_i \|(1-t_i){x}_0+t_i{x}_1-y_i\|_2^2\dd\gamma(x_0,x_1,y_1,\cdots,y_N) \\
&\geq \int_{\X^{(N+2)}} \left(\inf_{z_0,z_1\in\X} \sum_{i=1}^{N}\lambda_i \|(1-t_i){z}_0+t_i{z}_1-y_i\|_2^2\right) \dd\gamma(x_0,x_1,y_1,\cdots,y_N)\\
&=\int_{\X^{(N+2)}}  \sum_{i=1}^{N}\lambda_i \|(1-t_i)\hat{x}_0(y)+t_i\hat{x}_1(y)-y_i\|_2^2 \dd\gamma(x_0,x_1,y_1,\cdots,y_N).
\end{align*}
By taking the infimum of both sides of inequality above over $\gamma\in\mathcal{P}_2(\X^{N+2})$ and using the identity $\gamma'=(y_1,\cdots,y_N)_\#\gamma$ (projection onto the last $N$ coordinates), we arrive at the result.
\end{proof}

\begin{remark}
\label{discrete_equivalence}
One implication of Proposition \ref{original_mm} is that in case all the distributional data are supported on a discrete set of points, so does $\hat{\gamma}'$. Specifically, if each $\mu_{t_i}$ is supported on a finite set $X_i\in \X$ for all $i=1,\cdots,N$, then ${\rm supp} (\hat{\gamma}')$, i.e., support of $\hat{\gamma}'$, lies within $X_1\times \cdots X_N$ (it is straightforward to show this result, see Proposition 7 in \cite{delon2020wasserstein}). Also, $\hat{\pi}=(\hat{x}_0(y),\hat{x}_1(y))_\#\hat{\gamma}'$ is concentrated on a finite set, that is, the projection of ${\rm supp} (\hat{\gamma}')$ under the map $(\hat{x}_0(y),\hat{x}_1(y))$. In this case, the problem of measure-valued curves admits a solution in which only a finite number of measure-valued curves in $\X$ have non-zero measures. Therefore, for discrete target measures the problem reduces to a finite-dimensional linear programming as formulated in the following section. Figure \ref{weighted_regression} illustrates this concept for three discrete measures as the distributional data at three instants of time. Two possible trajectories for a mass particle at $t_0$ are shown with dotted lines. The solid lines represent the best fitting lines for each trajectory (resulted from linear regression in Euclidean space). One can observe that comparing to the lower fitting line, the upper one leads to a smaller value for the sum of squared residuals in $\X$, as it passes closer to its three corresponding points. By solving the multi-marginal problem in \eqref{second_mmf}, a smaller probability measure (weight) is expected to be assigned to the lower fitting line to penalize its higher value for the sum of squared residuals. 

\end{remark}

\begin{figure}[htb]
	\centering
	\includegraphics[width=3in]{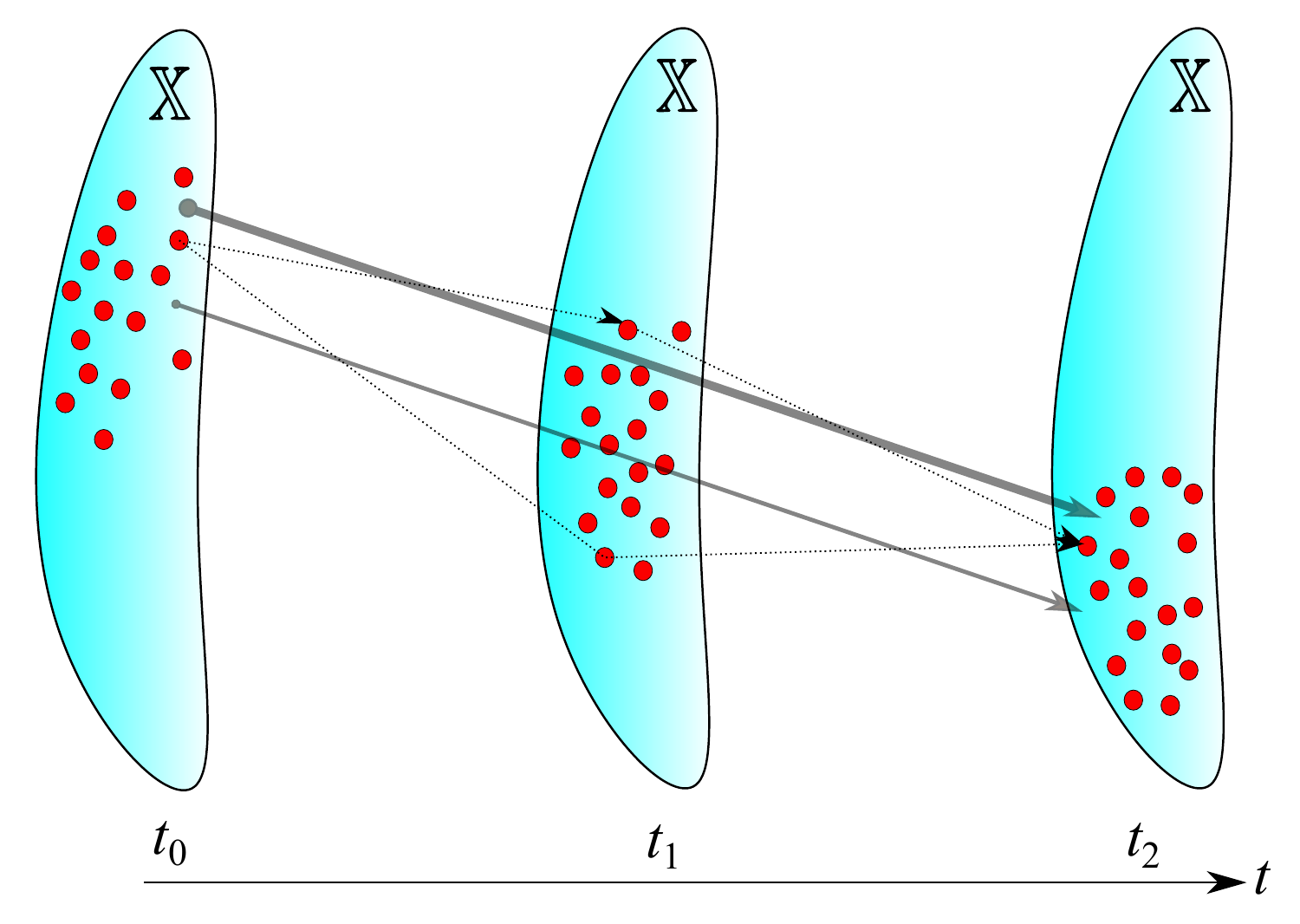}
	\caption{Illustration of measure-valued curves for discrete one-time marginals. The dotted lines show two different trajectories for a particle starting from $t_0$. The solid lines are their corresponding fitting lines resulted from linear regression in $\X$. The sum of squared residuals of the fitting line in the top has a lower value than that of the other one. The solution of multi-marginal problem assigns a higher probability measure (weight) to this fitting line. The thickness of  lines is proportional to the likelihood of each line.}
	\label{weighted_regression}
\end{figure}

\section{Discretization}
\label{sec:discretization}

In this section, we propose a strategy towards solving the multi-marginal problems introduced in the previous section. First, we express a discretized version of the problem and then invoke the entropy regularization to solve our multi-marginal formulation efficiently. This is beneficial as in many practical situations we only have a set of samples available for each one-time marginal. Thus, we can approximate each distributional data with a sum of Diracs placed at the positions of the available samples.  

\subsection{Discrete multi-marginal formulation}
Suppose for a finite set $X\subset\X$, $\mu_{t_i}=\sum_{y\in X} p^{t_i}_{y} \delta_{y}$, $i=1,\cdots,N$ are the given observations, where for each $i$ the non-negatives weights $p^{t_i}_{y}$ sum up to 1. Without of loss of generality,  it is assumed that all $\mu_{t_i}$s are supported on $X$ or a subset of it.
We define the multi-marginal problem as seeking a multi-dimensional array $({\Gamma}_{x_0,x_1,y_1,\cdots,y_N})_{(x_0,x_1,y_1,\cdots,y_N)\in X^{N+2}}$ with non-negative real elements which solves the following linear programming problem,
\begin{align}
\label{discrete}
&\min_{\substack \Gamma\geq 0 } \quad \sum_{x_0,x_1,y_1,\cdots,y_N\in X} c(x_0,x_1,y_1,\cdots,y_N)  {\Gamma}_{x_0,x_1,y_1,\cdots,y_N}\\ \nonumber
&{\rm s.t.}\quad P_{y_j}(\Gamma)=p^{t_j}_{y_j},~\forall y_j\in X,~j=1,\cdots,N
\end{align}
where,
\[c(x_0,x_1,y_1,\cdots,y_N)=\sum_{i=1}^N \lambda_i||(1-t_i)x_0+t_ix_1-y_i||^2
\]
is the cost of transport and
\begin{equation}
\label{linear_projection}
P_{y_j}(\Gamma)=\sum_{x_0,x_1,y_1,\cdots,y_{j-1},y_{j+1},\cdots,y_N\in X}  {\Gamma}_{x_0,x_1,y_1,\cdots,y_{j-1},y_j,y_{j+1},\cdots,y_N}
\end{equation}
is the projection operator on the marginal of $\Gamma$ associated with $y_j$.

Notice that $ {\Gamma}$ is analogous to the multi-coupling $\gamma$ in \eqref{main_eq}. Comparing to the definition of $\pi$ in continuous setting, we have ${\Pi}_{x_0,x_1}=P_{x_0,x_1}(\Gamma), ~\forall x_0,x_1\in X$ as the projection of multi-dimensional array ${\Gamma}$ onto $(x_0,x_1)$ obtained by summing over all the remaining entries. This leads to a probability measure over the space of linear functions represented by the endpoints in $X$.

\begin{remark}
Linear programming \eqref{discrete} is equivalent to \eqref{main_eq} if $X$ is chosen rich enough to contain ${\rm supp}(\hat{\pi})$ as described in Remark \ref{discrete_equivalence}. This assumption is not required if, instead of \eqref{main_eq}, we write the discrete version of \eqref{second_mmf} (see Remark \ref{discrete_equivalence}). However, as explained in the next subsection, we continue with the discrete formulation in \eqref{main_eq} due to the structure of its transportation cost which entails a lower time and space complexities in order to implement Sinkhorn's algorithm. 
\end{remark}

The previous formalism deals with the case of measure-valued lines in discrete setting. A similar formalism for quadratic curves seeks a multi-dimensional array $\Gamma$, such that
$({\Gamma}_{x_0,x_1,x_2,y_1,\cdots,y_N})_{(x_0,x_1,x_2,y_1,\cdots,y_N)\in X^{N+3}}$, with non-negative elements which solves
\begin{align}
\label{discrete2}
&\min_{\substack \Gamma\geq 0 } \quad \sum_{x_0,x_1,x_2,\{y_i\}_{i=1}^N\in X} c(x_0,x_1,x_2,y_1,\cdots,y_N)  {\Gamma}_{x_0,x_1,x_2,y_1,\cdots,y_N}\\ \nonumber
&{\rm s.t.}\quad P_{y_j}(\Gamma)=p^{t_j}_{y_j},~\forall y_j\in X,~j=1,\cdots,N
\end{align}
where,
\[c(x_0,x_1,x_2,y_1,\cdots,y_N)=\sum_{i=1}^N \lambda_i||x_0+t_ix_1+t^2x_2-y_i||^2,\]
and
\[
P_{y_j}(\Gamma)=\sum_{x_0,x_1,x_2,y_1,\cdots,y_{j-1},y_{j+1},\cdots,y_N\in X}  {\Gamma}_{x_0,x_1,x_2,y_1,\cdots,y_{j-1},y_j,y_{j+1},\cdots,y_N}.
\]

\subsection{Entropy regularization}
The linear programming problems in \eqref{discrete} and \eqref{discrete2} suffer from a high computational burden. However, the more efficient Sinkhorn iteration  can be employed to converge to the optimal solution of  entropy-regularized problem as explained in the following.

Given two discrete probability measures $\mu=\sum_{x\in \X} p_{x} \delta_{x}$ and $\nu=\sum_{x\in \X} q_{x} \delta_{x}$, supported on a finite set $X\subset \X$, the relative entropy (Kullback-Leibler divergence) of $\mu$ with respect to $\nu$ \cite{CovTho} is defined as
\[
 H(\mu|\nu) =  
  \begin{cases} 
   \sum_{x\in \X}p_{x} \log \frac{p_{x}}{q_{x}}& \text{if}  ~\mu\ll \nu  \\
   +\infty       & \text{otherwise,}
  \end{cases}
\]
where $\mu\ll \nu$ indicates that $\mu$ is absolutely continuous with respect to $\nu$ and $0\log 0$ is defined to be 0. Also, define
\[H(\mu):=H(\mu|1)=\sum_{x\in \X} p_{x}\log p_{x},\]
which is effectively the negative of entropy of $\mu$.

In the rest of this section, we present the results for measure-valued lines, however, one can readily state analogous results for quadratic curves.  
The entropy regularized version of our multi-marginal formulation is the convex problem
\begin{align}
\label{entr_discrete}
\min_{\substack \Gamma\geq 0 } \quad &\sum_{x_0,x_1,y_1,\cdots,y_N\in X} c(x_0,x_1,y_1,\cdots,y_N)  {\Gamma}_{x_0,x_1,y_1,\cdots,y_N}+\epsilon H({\Gamma})\\ \nonumber
{\rm s.t.}\quad & P_{y_j}(\Gamma)=p^{t_j}_{y_j},~\forall y_j\in X,~j=1,\cdots,N,
\end{align}
where $\epsilon > 0$ is a regularization parameter.

There are effective strategies to solve entropy regularized optimal transport problems, for instance, the alternating projection method (iterative Bergman projections~\cite{benamou2015iterative,benamou2016numerical, bauschke2000dykstras}), which is based on projecting sequentially an initial $\Gamma$ onto the subset corresponding to each marginal constraint.

Sinkhorn's algorithm~\cite{cuturi2013sinkhorn} is another approach which enjoys a slightly better performance in terms of space complexity and parallel computation as discussed in detail in~\cite{benamou2015iterative}. 
In this method, the optimal solution $\hat{\Gamma}$
is expressed in terms of the Lagrange dual variables, which may be computed by Sinkhorn iterations.
In the following, we first briefly touch upon this method, then by presenting a similar idea to that used in \cite{haasler2020multi}, we explain how to improve the performance of this algorithm in terms of time and space complexities.


It can be shown \cite{haasler2020multi} that, for any $x_0,x_1,y_1,\cdots,y_N\in X$, the minimizer of \eqref{entr_discrete} is of the form 
\begin{equation}
\label{sinkhorn_minimizer}
   {\hat{\Gamma}}_{x_0,x_1,y_1,\cdots,y_N}=\exp(-\frac{c(x_0,x_1,y_1,\cdots,y_N)}{\epsilon})\times a^{t_1}_{y_1}\times\cdots \times a^{t_N}_{y_N},
\end{equation}
for suitable values of $a^{t_j}_{y_j},~j=1,\cdots,N$. These are dual variables in the dual problem (see e.g. \cite{nenna2016numerical}). 
In Sinkhorn's algorithm, the $a^{t_j}_{y_j}$'s in \eqref{sinkhorn} can be found by iteratively updating their values via
\begin{equation}
\label{iterations}
    a^{t_j}_{y_j}\leftarrow a^{t_j}_{y_j} \times p^{t_j}_{y_j}/P_{y_j}(\hat{\Gamma}), \forall j=1,\cdots,N, y_j\in X.
\end{equation}

It is known that in the scheme above, the sequence converges at least linearly to a minimizer of \eqref{entr_discrete} (see e.g. \cite{haasler2020multi,bauschke2000dykstras}).  

The computational drawback of Sinkhorn's algorithm lies in computing
the projections $P_{y_j}(\hat{\Gamma})$ in \eqref{iterations}, as these grow exponentially in the number of snapshots ($N$). Furthermore, a large amount of memory is required to store the array $\hat{\Gamma}$ at each iteration which leads to a space complexity issue. 
However, the specific structure of the cost in \eqref{sinkhorn_minimizer} can be exploited to mitigate the aforementioned bottlenecks. Similar ideas have been advanced in \cite{elvander2020multi,haasler2020multi}. 

Notice that we can partially decouple the cost as 
\[c(x_0,x_1,y_1,\cdots,y_N)= \sum_{i=1}^N \lambda_i c_i(x_0,x_1,y_i)\]
where,
\[
c_i(x_0,x_1,y_i)=||(1-t_i)x_0+t_ix_1-y_i||^2.
\]
The first implication of this decoupling is that, it is now not needed to store all the elements of $c(x_0,x_1,y_1,\cdots,y_N)$, but only those required to calculate $c_i$s. Moreover, the minimizer in \eqref{sinkhorn_minimizer}, can be decoupled as 
\begin{equation}
\label{sinkhorn}
   {\hat{\Gamma}}_{x_0,x_1,y_1,\cdots,y_N}=\prod_{i=1}^N a^{t_j}_{y_i}\exp(-\frac{c_i(x_0,x_1,y_i)}{\epsilon}).
\end{equation}
In the following, we explain how to leverage this structure to calculate $P_{y_1}(\hat{\Gamma})$ more efficiently. The same procedure can be utilized to compute other projections, i.e.,  $P_{y_j}(\hat{\Gamma})$, $j=2,\cdots,N$.
One can easily observe that $P_{y_1}(\hat{\Gamma})$ for fixed $x_0,x_1 \in X$,  reads
\begin{equation}
\label{reduction_Sinkhorn}
P_{y_1|x_0,x_1}(\hat{\Gamma})=a^{t_1}_{y_1}\exp(-\frac{c_1(x_0,x_1,y_1)}{\epsilon}) \prod_{i=2}^N \left( \sum_{y_i\in X} a^{t_j}_{y_i}\exp(-\frac{c_i(x_0,x_1,y_i)}{\epsilon})\right),
\end{equation}
for any $y_1\in X$, and hence,
\begin{equation}
\label{reduction_Sinkhorn2}    
P_{y_1}(\hat{\Gamma})= \sum_{x_0,x_1\in X} P_{y_1|x_0,x_1}(\hat{\Gamma}).
\end{equation}

The benefit of this approach is that the term
\[
\prod_{i=2}^N \left( \sum_{y_i\in X} a^{t_j}_{y_i}\exp(-\frac{c_i(x_0,x_1,y_i)}{\epsilon})\right)
\]
in \eqref{reduction_Sinkhorn} is independent of $y_1$ and thus it is the same for all $y_1 \in X$. The complexity of computing this term for all $x_0,x_1 \in X$ is $\mathcal{O}((N-1)|X|^{3})$, where $|X|$ is the cardinality of the discrete set $X$. This leads to $\mathcal{O}(N|X|^{3})$ as the total computational complexity of each Sinkhorn iteration by using \eqref{reduction_Sinkhorn2} to compute the projections.
Notice that computing the projections $P_{y_j}(\hat{\Gamma})$ by summing over
all the indices $x_0,x_1,y_1,\cdots,y_{j-1},y_{j+1},\cdots,y_N$
as defined in \eqref{linear_projection} scales exponentially in the value of $N$, i.e., the computational complexity of one Sinkhorn update in \eqref{iterations} is $\mathcal{O}(|X|^{N+2})$. Therefore, leveraging the structure of cost in our multi-marginal formulation, decreases the computational complexity of the Sinkhorn iterations substantially.

\section{Gaussian case}\label{sec:Gaussian}
Suppose the data are Gaussian distributions $\mu_{t_i} \sim N(0,C_{y_i})$, $i=1,\cdots,N$, where the $C_{y_i}$'s are symmetric and positive definite matrices. The means of distributions are assumed to be zero for simplicity and without loss of generality. This is due to the fact that for Gaussian measures, the means can be treated separately via ordinary regression in Euclidean space and thereby, the means for the optimal curve in $(\PD,W_2)$ can be computed as a function of $t$. 

In practical settings where for each marginal only a set of samples is available, we can approximate each $C_{y_i}$ with the sample covariance.
The following proposition recasts \eqref{main_eq0} as a Semi-Definite Programming (SDP).
\begin{proposition}
Consider  $\mu_{t_i}\sim N(0, {C}_{y_i})$, i.e., Gaussian ``points''. A minimizing $\hat{\gamma}$ in \eqref{main_eq0}
 is Gaussian with zero mean and covariance of the form
 \begin{equation}
{C}_{\gamma}=
\left[
    \begin{smallmatrix}
     {C}_{x_0}       &  {S}_{{x_0}{x_1}} & {S}_{{x_0}{x_2}} & {S}_{{x_0}{y_1}} & \dots &  {S}_{{x_0}{y_N}} \\
     {S}_{{x_0}{x_1}}^T       &  {C}_{x_1} & {S}_{{x_1}{x_2}} & {S}_{{x_1}{y_1}} & \dots &  {S}_{{x_1}{y_N}} \\
     {S}_{{x_0}{x_2}}^T & {S}_{{x_1}{x_2}}^T        &  {C}_{x_2} &  {S}_{{x_2}{y_1}} & \dots &  {S}_{{x_2}{y_N}} \\
     {S}_{{x_0}{y_1}}^T       &  {S}_{{x_1}{y_1}}^T & {S}_{{x_2}{y_1}}^T &  {C}_{y_1} & \dots &  {S}_{{y_1}{y_N}} \\
    \vdots & \vdots & \vdots & \vdots & \ddots \\
     {S}_{{x_0}{y_N}}^T       &  {S}_{{x_1}{y_N}}^T & {S}_{{x_2}{y_N}}^T &  {S}^T_{{y_1}{y_N}} & \dots &  {C}_{y_N}
\end{smallmatrix} \right]
\end{equation}
that solves
\begin{align}
\label{SDP}
   \min_{C_{\gamma}\succeq 0} 
   &\sum_{i=1}^N\lambda_i.{\rm tr}(C_{x_0}+t_i^2C_{x_1}+t_i^4C_{x_2}+C_{y_i}+2t_i S_{x_0x_1}\nonumber\\
   & +2t_i^2S_{x_0x_2}+2t_i^3S_{x_1x_2}-2S_{x_0y_i}-2t_iS_{x_1y_i}-2t_i^2S_{x_2y_i}).
\end{align}
where $C_{\gamma}\succeq 0$ indicates that $C_{\gamma}$ is positive semi-definite. 
\end{proposition}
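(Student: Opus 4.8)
The plan is to exploit the fact that the integrand in \eqref{main_eq0} is a quadratic form in $(x_0,x_1,x_2,y_1,\dots,y_N)$, so the objective depends on the multi-coupling $\gamma$ only through its first and second moments. Introducing the auxiliary variables $z_i := x_0+t_ix_1+t_i^2x_2-y_i$, which are linear in the coordinates, I would split each term of the cost via the elementary identity $\int_{\X^{N+3}}\|z_i\|_2^2\,\dd\gamma = \|\mathbb{E}_\gamma[z_i]\|_2^2 + {\rm tr}({\rm Cov}_\gamma(z_i))$, separating a ``mean part'' $\sum_i\lambda_i\|\mathbb{E}_\gamma[z_i]\|_2^2$ from a ``covariance part'' $\sum_i\lambda_i\,{\rm tr}({\rm Cov}_\gamma(z_i))$ that depends on $\gamma$ only through its covariance matrix $C_\gamma$.

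Next I would reduce to the centered Gaussian case in two moves. First, the marginal constraints $y_{i\#}\gamma=\mu_{t_i}=N(0,C_{y_i})$ pin the $y_i$-means to zero but leave the means $m_{x_0},m_{x_1},m_{x_2}$ of the remaining coordinates free; since $\mathbb{E}_\gamma[z_i]=m_{x_0}+t_im_{x_1}+t_i^2m_{x_2}$, the mean part is a sum of nonnegative terms that attains its minimum value $0$ at the admissible choice $m_{x_0}=m_{x_1}=m_{x_2}=0$. Hence an optimal $\gamma$ may be taken zero-mean, and its cost equals the covariance part alone. Second, given any feasible zero-mean $\gamma$ with covariance $C_\gamma$, I would replace it by the Gaussian $N(0,C_\gamma)$: this has identical covariance, hence identical cost, and its $y_i$-marginal is $N(0,[C_\gamma]_{y_iy_i})=N(0,C_{y_i})=\mu_{t_i}$, so feasibility is preserved. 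Combined with the existence of a minimizer furnished by Theorem \ref{theorem:mm}, this Gaussianization produces a Gaussian minimizer $\hat\gamma=N(0,C_\gamma)$ of the block form displayed in the statement.

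It then remains to identify the resulting finite-dimensional problem with the SDP \eqref{SDP}. A centered Gaussian is determined entirely by its covariance, and feasibility is equivalent to the two conditions $C_\gamma\succeq 0$ and $[C_\gamma]_{y_iy_i}=C_{y_i}$ for all $i$; every other block, namely $C_{x_0},C_{x_1},C_{x_2}$ and the cross-covariances $S_{\cdot\cdot}$, is unconstrained apart from the overall positive-semidefiniteness. Conversely, any $C_\gamma\succeq 0$ with the prescribed diagonal blocks is realizable as the covariance of a genuine (Gaussian) feasible coupling. Expanding ${\rm tr}({\rm Cov}_\gamma(z_i))$ using $\mathbb{E}[\langle a,b\rangle]={\rm tr}(\mathbb{E}[ab^T])$ and collecting the quadratic and bilinear terms in $t_i$ reproduces, term by term, the trace functional of \eqref{SDP}. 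Therefore minimizing $F_2$ over all feasible $\gamma$ is equivalent to the stated SDP over $C_\gamma\succeq 0$.

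I expect the main obstacle to be conceptual rather than computational: it is the Gaussianization step, and in particular the observation that replacing $\gamma$ by the Gaussian with the same covariance lands back in the feasible set only because the data $\mu_{t_i}$ are themselves Gaussian, so that the $y_i$-diagonal block of $C_\gamma$ is forced by feasibility to equal $C_{y_i}$. Once this is secured, the remaining ingredients — verifying realizability of every admissible $C_\gamma$ as a coupling covariance, and the bookkeeping of the quadratic expansion matching \eqref{SDP} — are routine.
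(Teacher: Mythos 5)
Your proposal is correct and takes essentially the same route as the paper: the paper's (much terser) proof likewise argues that since the cost is quadratic and the constraints involve only second-order moments of $\gamma$, a Gaussian minimizer exists, and that a direct expansion of the quadratic cost yields the trace functional in \eqref{SDP}. Your mean/covariance splitting, the Gaussianization step (valid precisely because the data $\mu_{t_i}$ are Gaussian, so feasibility pins the diagonal blocks $[C_\gamma]_{y_iy_i}=C_{y_i}$), and the realizability remark simply supply the details that the paper's two-sentence argument leaves implicit.
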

Notice that in $C_{\gamma}$, the sub-matrices $C_{y_i}$s are given, while the other blocks are unknown. 
\begin{proof}
As the marginals ${\{\mu_{t_i}\}}_{i=1}^N$  in \eqref{main_eq0} are Gaussian and the cost function is quadratic in $x_0,x_1,x_2,y_1,\cdots,y_N$, 
it follows that $\hat{\gamma}$ in \eqref{main_eq0} is also Gaussian as in the cost and constraints only second-order moments are involved.
Simple calculation shows the quadratic cost in (\ref{main_eq0}) can be written as that in \eqref{SDP}.
\end{proof}

It should be noted that since $\hat{\pi}=(x_0,x_1,x_2)_\#\hat{\gamma}$, one can express the optimal curve in $\mathcal{G_{\rm Quad}}$ (defined in \eqref{quad_measure_valued}) as
\begin{equation}
\label{interp_curve_relaxed}
   \nu_t \sim N(0, C_{x_0}+t^2C_{x_1}+t^4C_{x_2}+t (S_{x_0x_1}+S^T_{x_0x_1})
    +t^2(S_{x_0x_2}+S^T_{x_0x_2})+t^3(S_{x_1x_2}+S^T_{x_1x_2})),
\end{equation}
for $t\in\left[0,1\right]$, for the optimal solution of \eqref{SDP}.

Similar results can be derived for multi-marginal formulation of measure-valued lines  in \eqref{main_eq}. In particular, 
a minimizing $\hat{\gamma}$ in \eqref{main_eq}
 is Gaussian with zero mean and covariance of the form
 \begin{equation}
{C}_{\gamma}=
\left[
    \begin{smallmatrix}
     {C}_{x_0}       &  {S}_{{x_0}{x_1}}  & {S}_{{x_0}{y_1}} & \dots &  {S}_{{x_0}{y_N}} \\
     {S}_{{x_0}{x_1}}^T       &  {C}_{x_1}  & {S}_{{x_1}{y_1}} & \dots &  {S}_{{x_1}{y_N}} \\
     {S}_{{x_0}{y_1}}^T       &  {S}_{{x_1}{y_1}}^T &  {C}_{y_1} & \dots &  {S}_{{y_1}{y_N}} \\
    \vdots & \vdots  & \vdots & \ddots \\
     {S}_{{x_0}{y_N}}^T       &  {S}_{{x_1}{y_N}}^T &  {S}^T_{{y_1}{y_N}} & \dots &  {C}_{y_N}
\end{smallmatrix} \right],
\end{equation}
 that solves
\begin{align}
\label{SDP1}
   \min_{C_{\gamma}\succeq 0} 
   &\sum_{i=1}^N\lambda_i.{\rm tr}((1-t_i)^2C_{x_0}+t_i^2C_{x_1}+C_{y_i}+2t_i(1-t_i) S_{x_0x_1}\nonumber\\
   & -2(1-t_i)S_{x_0y_i}-2t_iS_{x_1y_i}).
\end{align}


To exemplify our regression approach for Gaussian distributional data we consider a one-dimensional Ornstein–Uhlenbeck process
modeled by an Itô stochastic differential equation
\[
\dd \mathbf{X}_t=-\mathbf{X}_t\dd t+2\dd\mathbf{W}_t
\]
where $(\mathbf{W}_t)_{t\geq 0}$ is a one-dimensional standard Wiener process. Such a process models the  dynamics of an over-damped Hookean spring in the presence of thermal fluctuations. Starting from $\mathbf{X}_0=0$, the variance of $\mathbf{X}_t$ reads
\[
\sigma^2(t)=2(1-\exp(-2t)).
\]
We consider the one-time marginals of this process at 20 different timestamps starting from $t=0.1$ to $t=1$ with equal time steps. In 
practical settings where only a set of samples from each one-time marginal is available, we can approximate the Gaussian distributions using the sample means and variances.  
The SDPs in \eqref{SDP} and \eqref{SDP1}  are solved separately to obtain the optimal multi-couplings $\hat{\gamma}$ and $\hat{\pi}$ in each case. In addition, for the sake of comparison we find the best geodesic which passes as close as possible to these $20$ Gaussian marginals. This can be done easily as the marginals are one dimensional, noticing that  the geodesic  between two Gaussian distributions with standard deviations $\sigma_0$ and $\sigma_1$ is Gaussian for all $t\in\left[0,1\right]$ with standard deviation $\sigma_t=(1-t)\sigma_0+t\sigma_1$. Therefore, the geodesic regression in this setting becomes a linear regression in $\mathbb{R}^1$ seeking the values of $\sigma_0,\sigma_1>0$. Figure~\ref{Gaussian_example} illustrates the obtained curves in Wasserstein space for different values of $t$ along with the dataset. Blue curves are the target marginals. One can notice that the measure-valued quadratic curves capture the variation in the dataset better than measure-valued linear curves. Also, the geodesic regression has the poorest performance among the three. 
This ensues from the fact that in geodesic regression a curve in Wasserstein space with highest correlated endpoints is sought. However, in the framework of this paper, this constraint is relaxed which can also moderate underfitting. In Fig.~\ref{Gaussian_example}, some of the measure-valued linear or quadratic curves are represented in each sub-figure. The intensity of color is proportional to the likelihood of each path. From a  fluid mechanical point of view, this can be thought of as a flux for the mass particles. More amount of mass transports through the darker regions.

\begin{figure}[htb]  
\centering
\subfigure[Geodesic regression]{\resizebox{!}{3cm}{\includegraphics{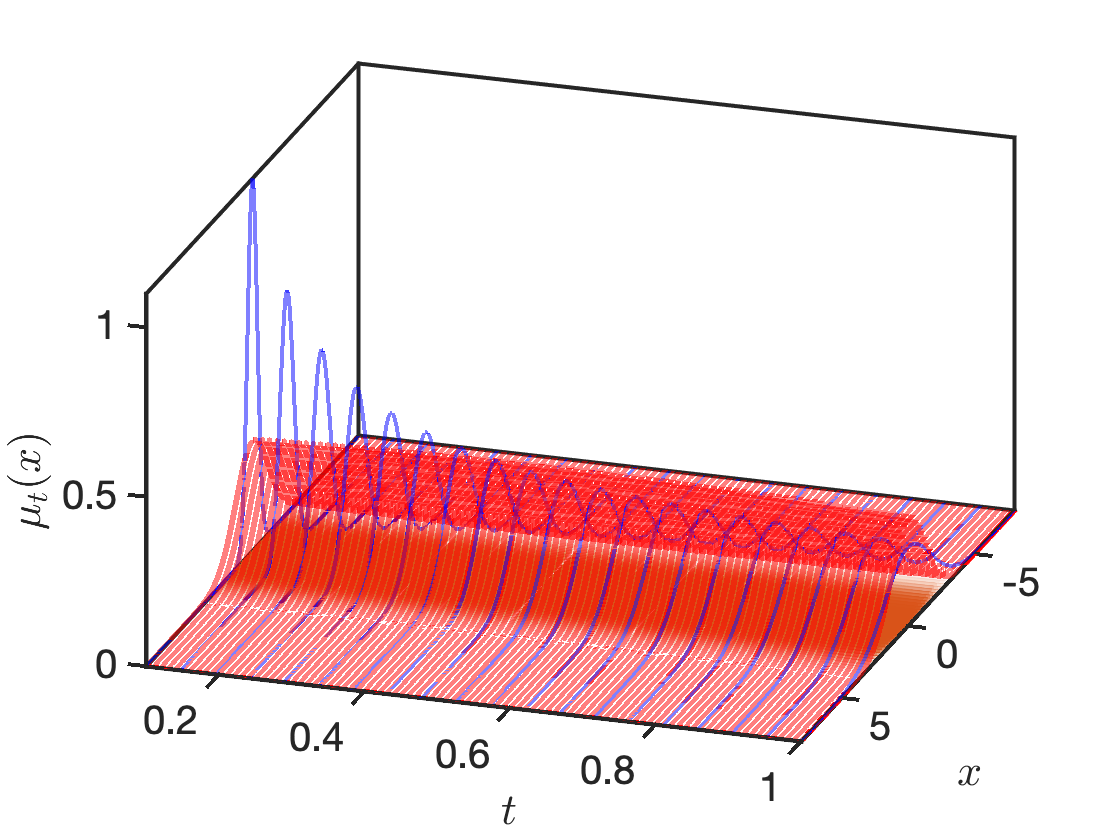}}}
\subfigure[Regression via measure-valued lines]{\resizebox{!}{3cm}{\includegraphics{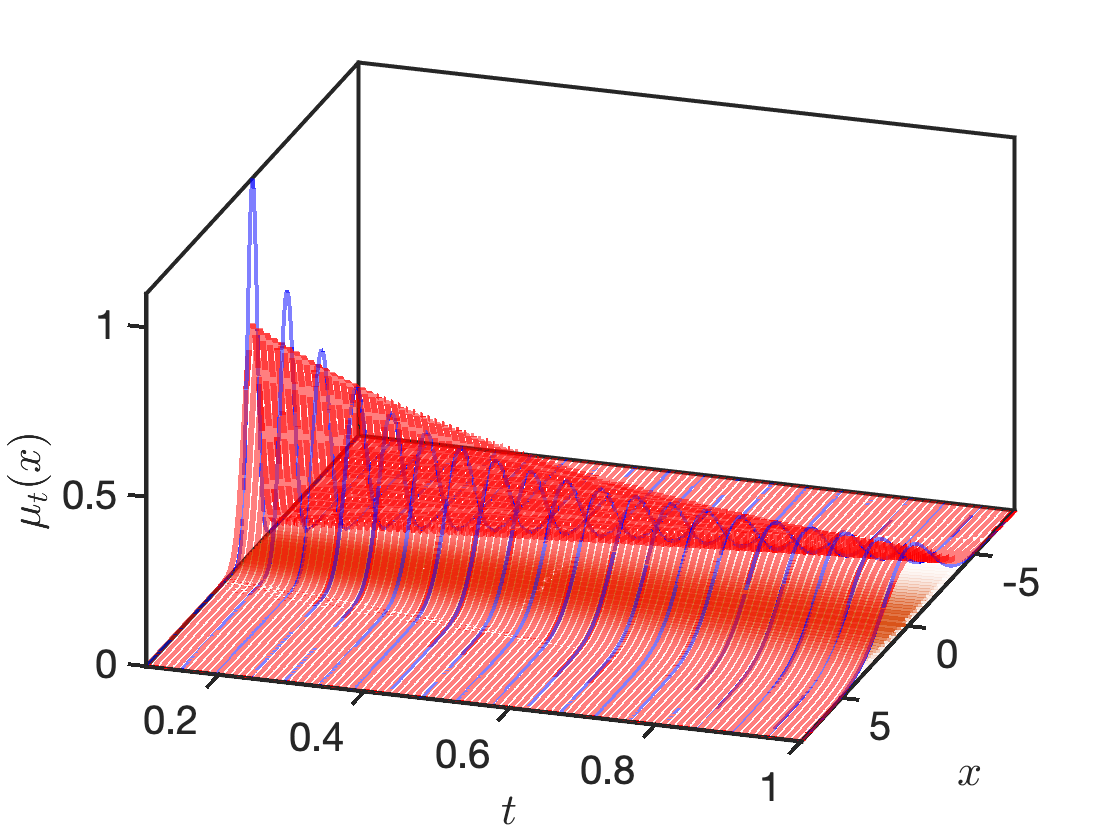}}}
\subfigure[Regression via measure-valued quadratic curves]{\label{Geo_reg}\resizebox{!}{3cm}{\includegraphics{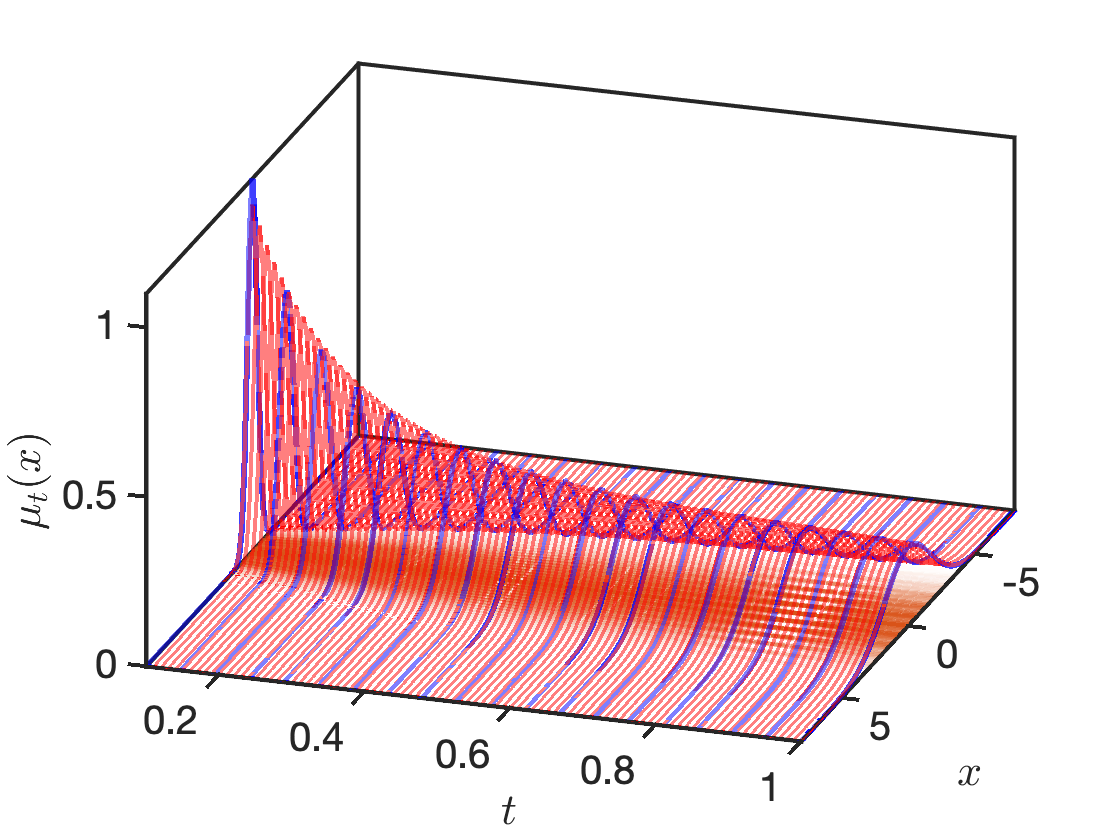}}}
\caption{Regression results for one-dimensional Gaussian marginals. Blue curves are the given distributions and red ones are the optimal curves in the Wasserstein space. The intensity of color in linear and quadratic curves is proportional to the likelihood of each path.}
\label{Gaussian_example}
\end{figure}

\section{Gaussian mixtures}\label{sec:GMM}
Linear combinations of Gaussian measures can model multi-modal  densities, which are broadly used to study properties of populations with several subgroups. More generally, the set of all finite
Gaussian mixture distributions ($\mathcal{GM}(\X)$) is a dense subset of $\PD$ in the Wasserstein metric~\cite{delon2020wasserstein}.  In fact, in principle, we can approximate any measure in $\PD$ with arbitrary precision with parameters for the Gaussian mixture determined via the Expectation-Maximization algorithm.

While the displacement interpolation of Gaussian distributions remains Gaussian, 
for Gaussian mixtures this invariance does not hold. Nevertheless, we may want to retain the Gaussian mixture structure of the interpolation due to their physical or statistical features.
In~\cite{chen2018optimal,delon2020wasserstein}, a Wasserstein-type distance on Gaussian mixture models is proposed by restricting the set of feasible coupling measures in the optimal transport problem to Gaussian mixture models. This gives rise to a geometry that inherits properties of optimal transport while it  preserves the Gaussian mixture structure. 
Specifically, for positive integers $K_0$ and $K_1$, consider the following Gaussian mixture models on $\X$,
\[\mu_0=p_{\nu_1}^0\nu_1^0+\cdots +p_{\nu_{K_0}}^0\nu_{K_0}^0,~~\mu_1=p_{\nu_1}^1\nu_1^1+\cdots +p_{\nu_{K_1}}^1\nu_{K_1}^1,\]
where each $\nu^i_j$ is a Gaussian distribution and $p^i=\left[p_{\nu_i}^i~\cdots~p_{\nu_{K_i}}^i\right]^T,~i=0,1$, are probability vectors.
Now define a Wasserstein-type distance between the two Gaussian mixtures $\mu_0$ and $\mu_1 \in \mathcal{GM}(\X)$ by minimizing 
\[\int_{\X^2} \|x-y\|_2^2~\dd \pi(x,y)\]
over $\pi\in \Pi(\mu_0,\mu_1) ~\bigcap~ \mathcal{GM}(\X^2)$. The square root of  minimum defines a metric on $\mathcal{GM}(\X)$ denoted by $W_M(\mu_0,\mu_1)$~\cite{chen2018optimal,delon2020wasserstein}. Clearly,
\[W_2(\mu_0,\mu_1) ~\leq ~ W_M(\mu_0,\mu_1),~~\forall \mu_0,\mu_1 \in \mathcal{GM}(\X).\]  

The problem above has an equivalent discrete formulation. In particular, by viewing the Gaussian mixtures as discrete probability distributions on the Wasserstein space of Gaussian distributions, we can show~\cite{delon2020wasserstein}

\begin{equation}
\label{GMM_discrete}
    W_M^2(\mu_0,\mu_1)={ \min_{w\in\Pi(p^0,p^1)}} \sum_{i,j} w_{ij} W^2_2(\nu_i^0,\nu_j^1),
\end{equation}
where $\Pi(p^0,p^1)$ denotes the space of joint distributions between the probability vectors $p^0$ and $p^1$. 
The space of Gaussian mixtures equipped with this metric is a geodesic space for which one can define the displacement interpolation~(see ~\cite{chen2018optimal,delon2020wasserstein} for further details). 

This Wasserstein-type distance between the discrete distributions on the Wasserstein space of Gaussian distributions, allows for the notion of measure-valued curves being carried over into the case of Gaussian mixtures. In other words, in the space of Gaussian distributions, the displacement interpolations (Eq. \eqref{Gaussian_geodesic}) play the role of straight lines in Euclidean space. Therefore, the goal is to find a probability measure over the space of geodesics of Gaussian distributions, for which the one-time marginals approximate a set of Gaussian mixtures indexed with timestamps. To do so, consider the set $X=\left\{ \nu_i \right\}_{i=1}^K$ which consists of a finite number of Gaussian distributions. Also, the available data \[{\left\{\mu_{t_i}=\sum_{\nu\in X} p_{\nu}^{t_i}\nu\right\}}_{i=1}^N,\] 
is a family of Gaussian mixtures, each associated with a timestamp $t_i \in \left[0,1\right]$. Each $\mu_{t_i}$ can also be thought of as a discrete probability measure over the space of Gaussian measures supported on $X$ (or a subset of $X$). By analogy with the formalism for measure-valued lines (Eq. \eqref{linear_prob}), we minimize
\begin{equation}
\label{measure_valued_mixture}
\min_{w\in \Omega} ~\sum_{i=1}^{N}\lambda_i W_M^2(~\sum_{j\ell} w_{j\ell}~g^{\nu_{j}\nu_{\ell}}_{t_i}~,~\mu_{t_i}~),
\end{equation}
where $\Omega~=~\left\{ w\in \R_+^{K\times K} ~|~\sum_{j\ell}w_{j\ell}=1 \right\}$ and $g^{\nu_{j}\nu_{\ell}}_{t}$ represents the displacement interpolation between $\nu_j$ and $\nu_\ell$.
This problem can be recast as a multi-marginal optimal transport problem which enjoys a linear structure by pursuing the same strategy introduced in Section \ref{sec:multimarginal}. In particular, \eqref{measure_valued_mixture} is equivalent to  seeking a multi-dimensional array  $({\Gamma}_{\sigma_0,\sigma_1,\nu_1,\cdots,\nu_N})_{(\sigma_0,\sigma_1,\nu_1,\cdots,\nu_N)\in X^{N+2}}$ with non-negative real elements which solves
\begin{align}
\label{GMM_Linear_MM}
&\min_{\substack \Gamma\geq 0 } \quad \sum_{\sigma_0,\sigma_1,\nu_1,\cdots,\nu_N\in X} c(\sigma_0,\sigma_1,\nu_1,\cdots,\nu_N)  {\Gamma}_{\sigma_0,\sigma_1,\nu_1,\cdots,\nu_N}\\ \nonumber
&{\rm s.t.}\quad P_{\nu_j}(\Gamma)=p^{t_j}_{\nu_j},~\forall \nu_j\in X,~j=1,\cdots,N
\end{align}
where 
\[c(\sigma_0,\sigma_1,\nu_1,\cdots,\nu_N)=\sum_{i=1}^N \lambda_iW_2^2(g^{\nu_{j}\nu_{\ell}}_{t_i},\nu_i),
\]
and $P_{\nu_j}(\Gamma)$ is the projection operator on the marginal of $\Gamma$ associated with $\nu_j$, cf. \eqref{linear_projection}. 
Also, the minimizer of \eqref{measure_valued_mixture} ($\hat{w}$) can be obtained by the projection $\hat{w}=P_{\sigma_0,\sigma_1}(\hat{\Gamma})$, where $\hat{\Gamma}$ is the minimizer of \eqref{GMM_Linear_MM}.

The formalism above is a linear programming which can be solved efficiently as one can solve the entropy regularized version of it by leveraging the generalized Sinkhorn algorithm as described in Section \ref{sec:discretization}. 

We exemplify this approach for Gaussian mixtures with the following toy example. We consider a finite set of probability measures which consists of 4 Gaussian distributions as depicted in Fig. \ref{basis}. The distributional data at 4 instants of time are constructed by choosing some probability vectors over the elements of this set. These target distributions are shown in Fig. \ref{Distributional data_GMM}. The linear programming in \eqref{GMM_Linear_MM} is solved, which results in a curve in $\PD$ for which the one-time marginals are Gaussian mixtures. The result of regression for this problem is illustrated in Fig. \ref{GMM_results} at some timestamps.  One can observe that the one-time marginals of the obtained curve capture the variation of the distributional data in time. 

\begin{figure}[htb]
	\centering
	\includegraphics[width=2.5in]{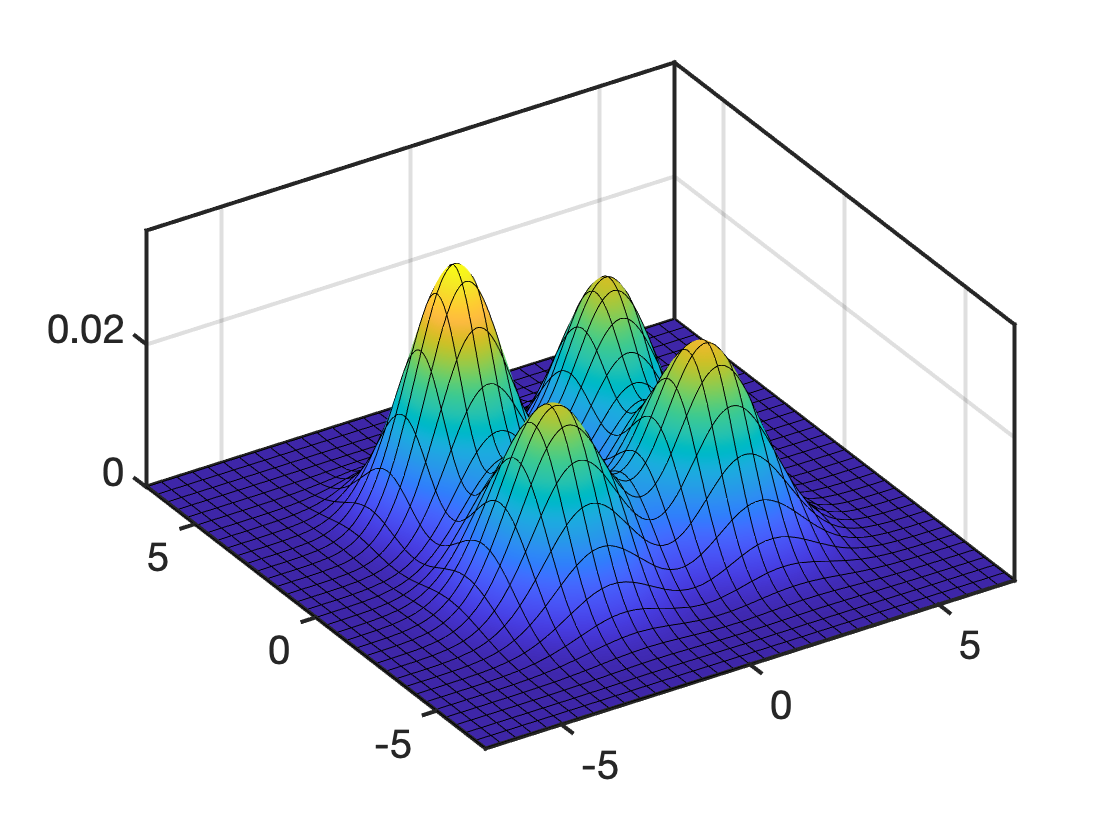}
	\caption{Gaussian Basis}
	\label{basis}
\end{figure}

\begin{figure}[htb]  
\centering
\subfigure[$t=\frac{1}{10}$]{ \resizebox{!}{2.2cm}{\includegraphics{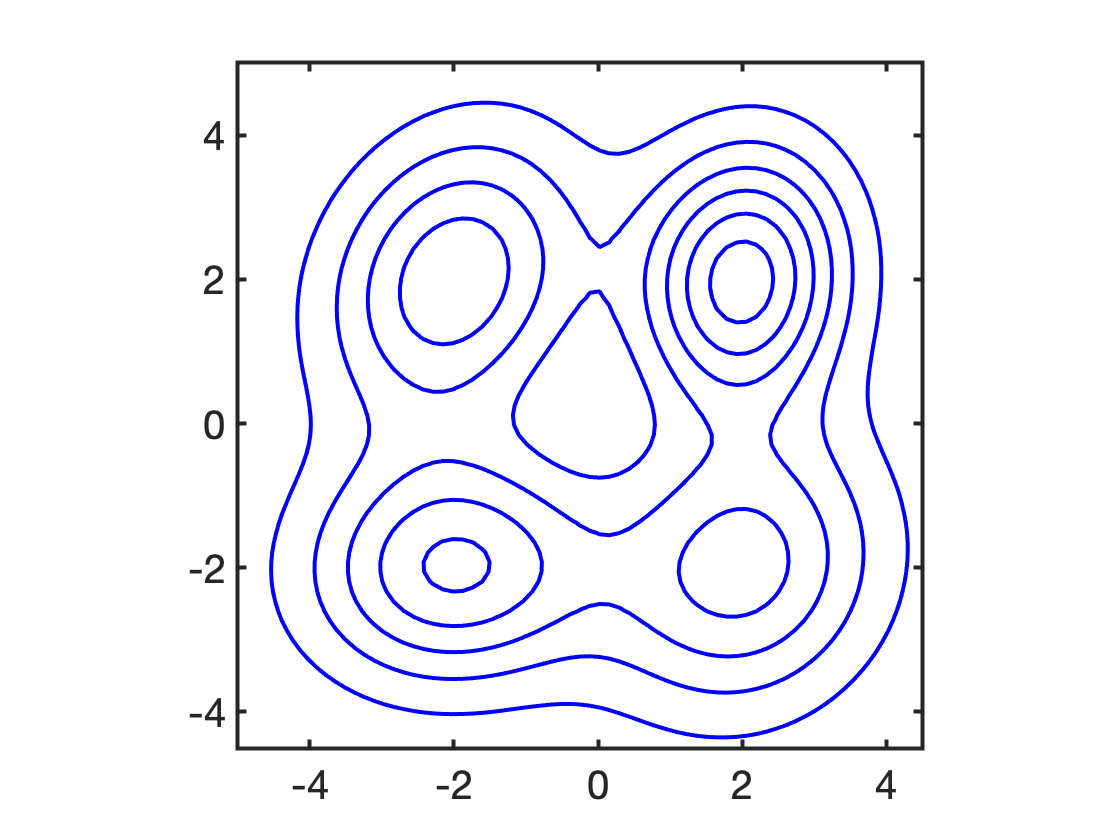}}}
\subfigure[$t=\frac{1}{3}$]{ \resizebox{!}{2.2cm}{\includegraphics{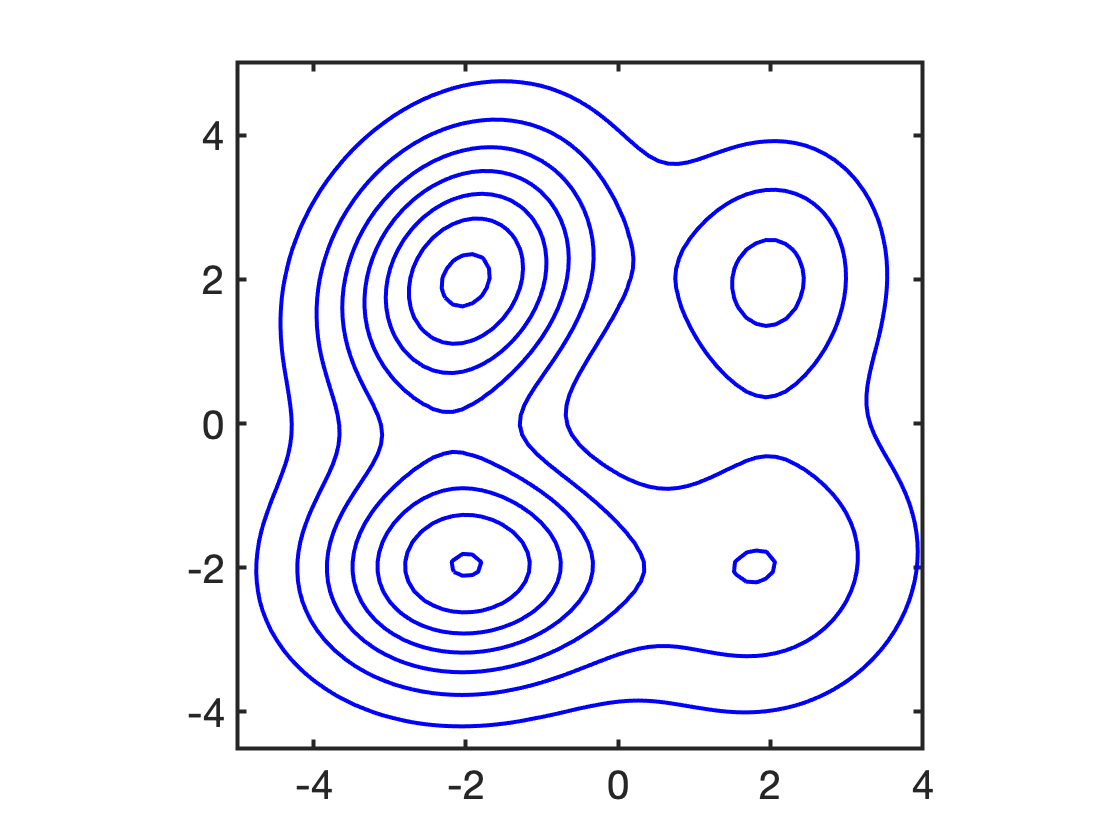}}}
\subfigure[$t=\frac{2}{3}$]{ \resizebox{!}{2.2cm}{\includegraphics{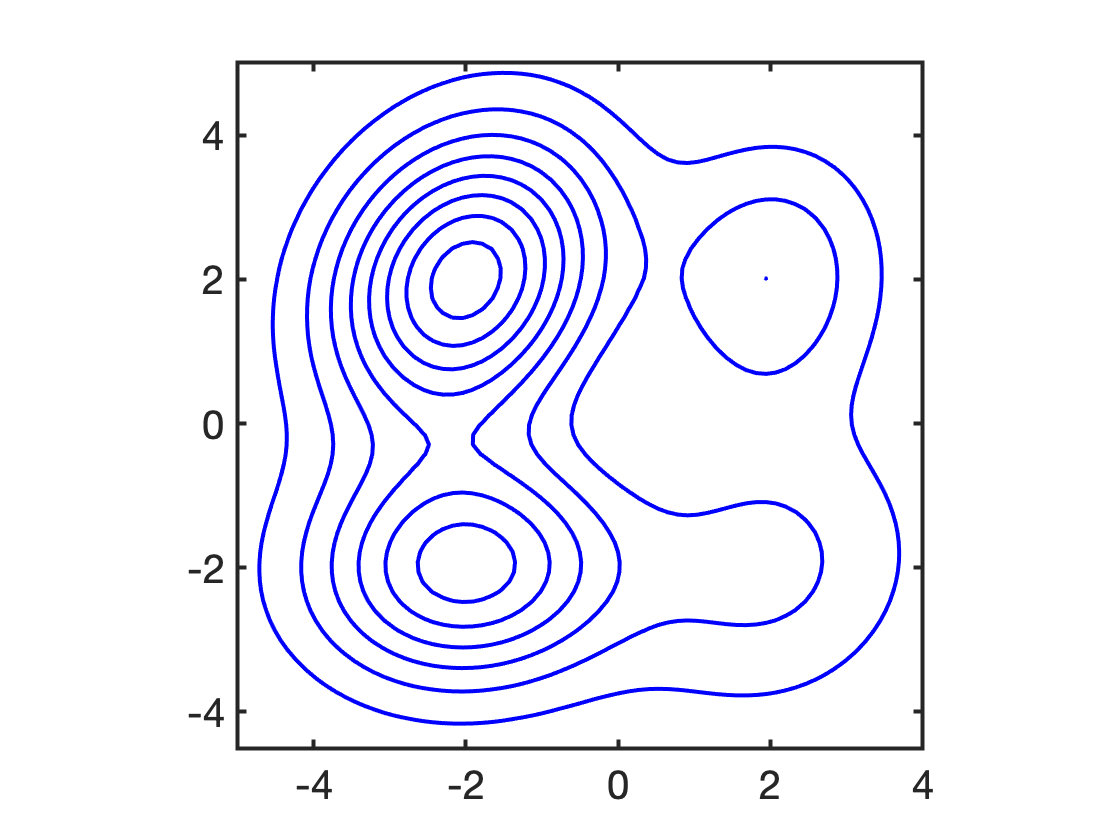}}}
\subfigure[$t=\frac{9}{10}$]{ \resizebox{!}{2.2cm}{\includegraphics{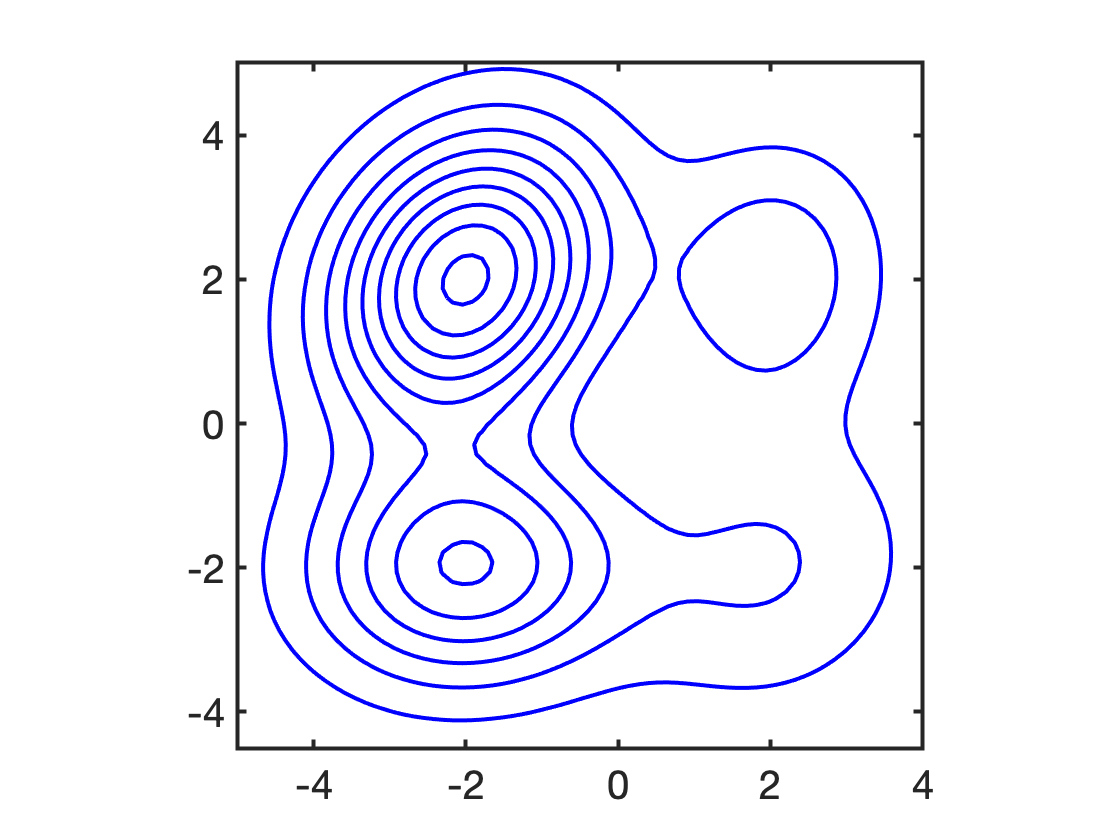}}}
\caption{Distributional data}
\label{Distributional data_GMM}
\end{figure}

\begin{figure}[htb]  
\centering
\subfigure[$t=0$]{ \resizebox{!}{2.2cm}{\includegraphics{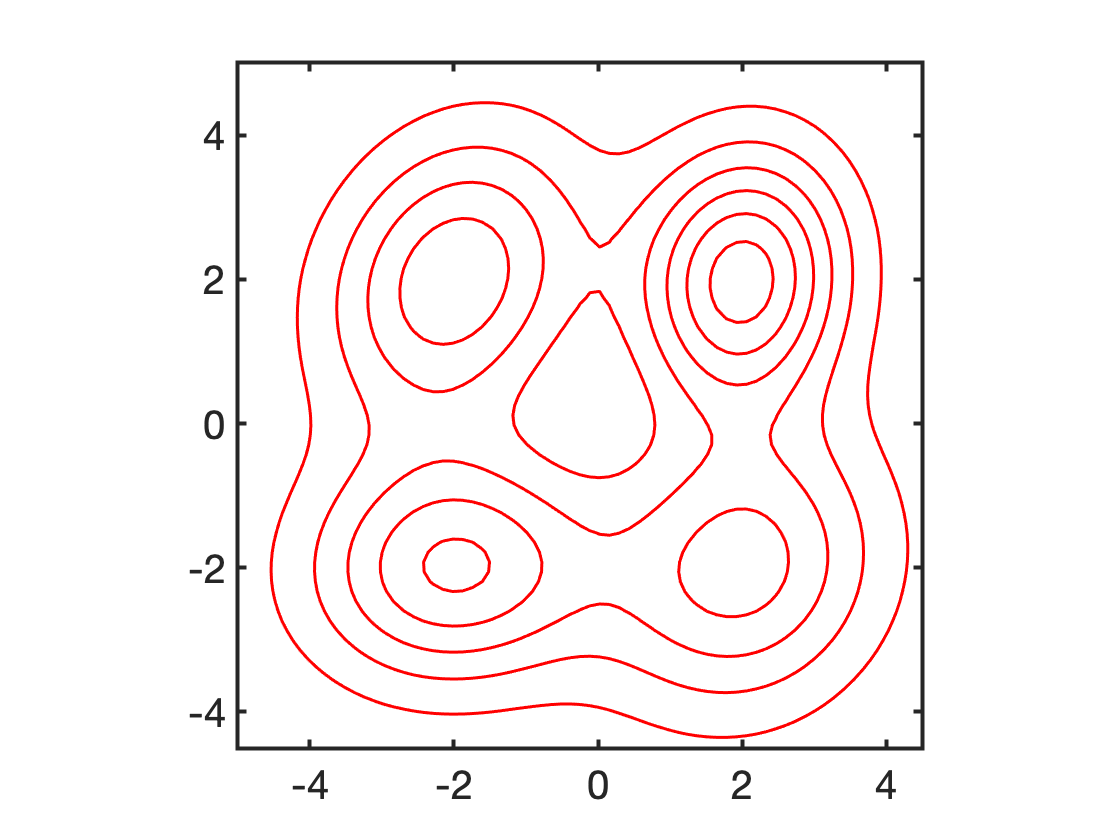}}}
\subfigure[$t=\frac{1}{10}$]{ \resizebox{!}{2.2cm}{\includegraphics{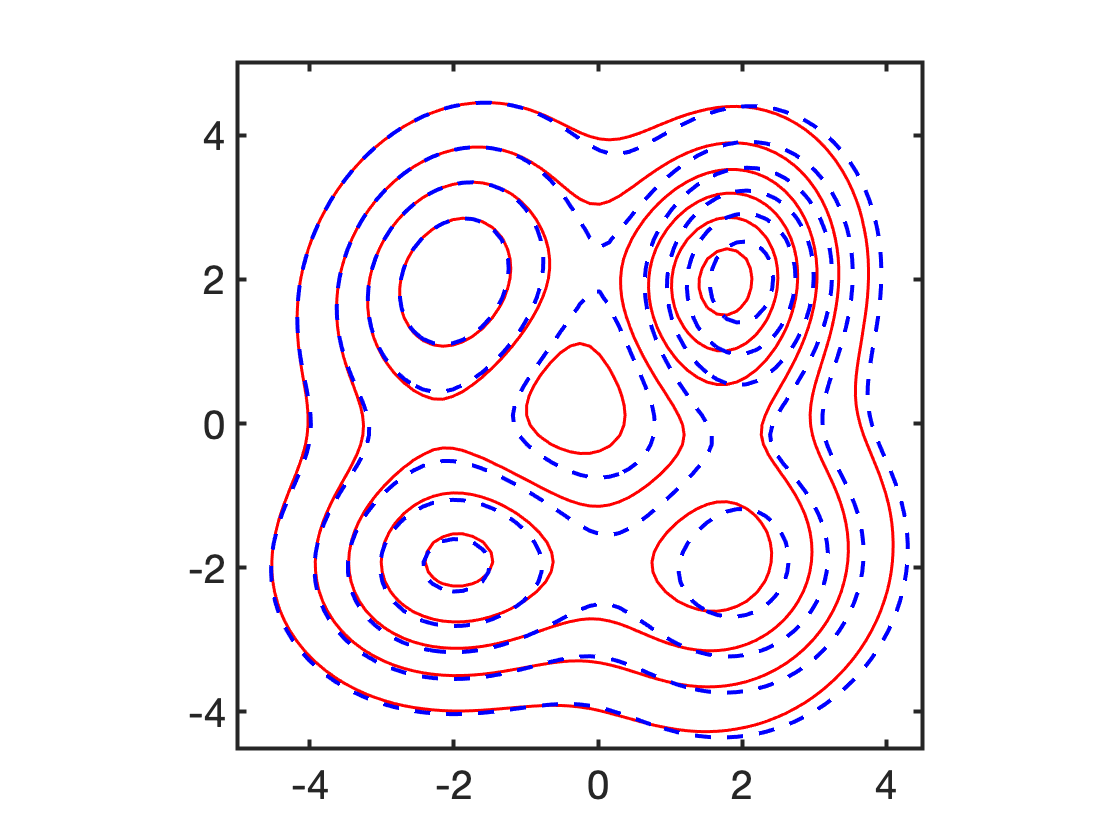}}}
\subfigure[$t=\frac{2}{10}$]{ \resizebox{!}{2.2cm}{\includegraphics{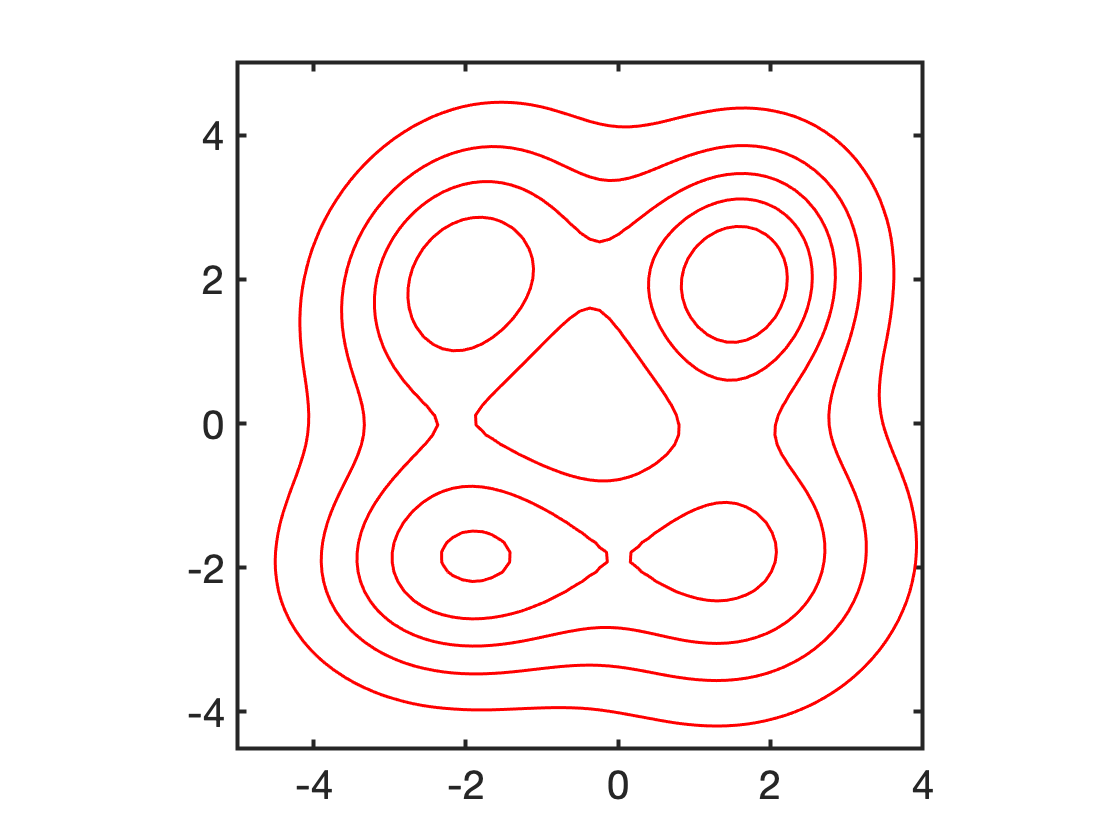}}}
\subfigure[$t=\frac{1}{3}$]{ \resizebox{!}{2.2cm}{\includegraphics{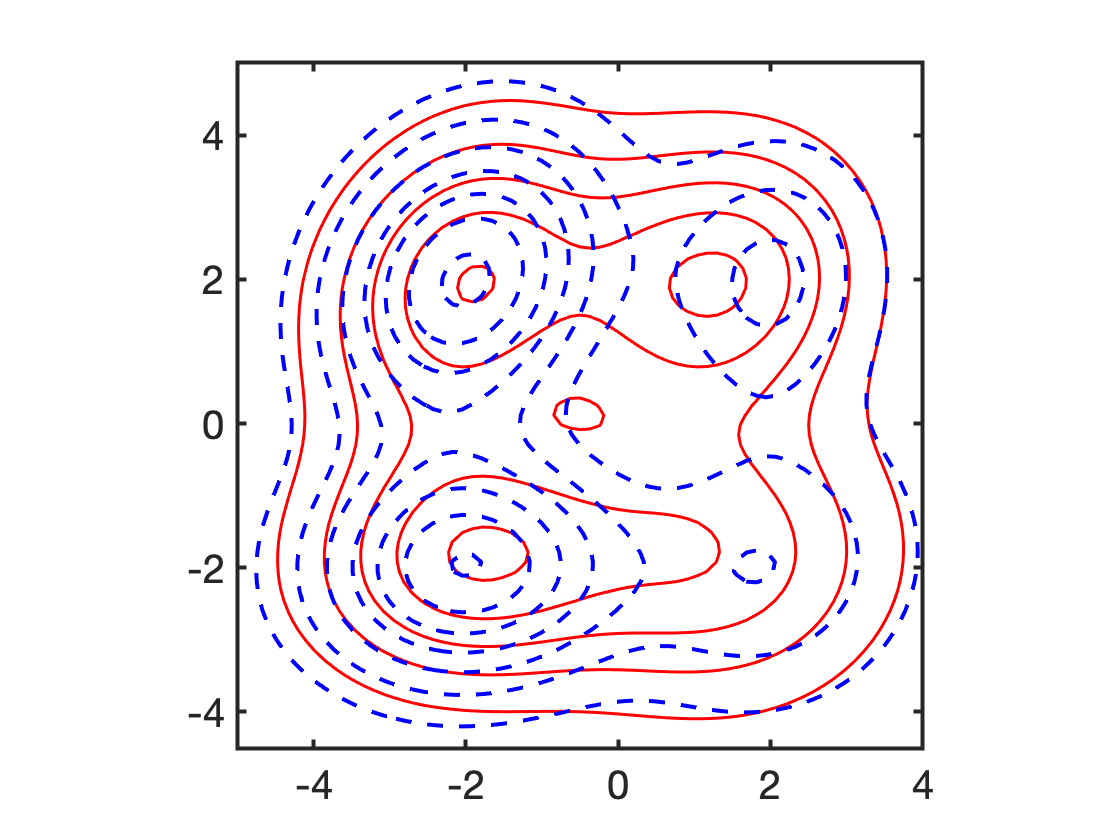}}}
\subfigure[$t=\frac{2}{3}$]{ \resizebox{!}{2.2cm}{\includegraphics{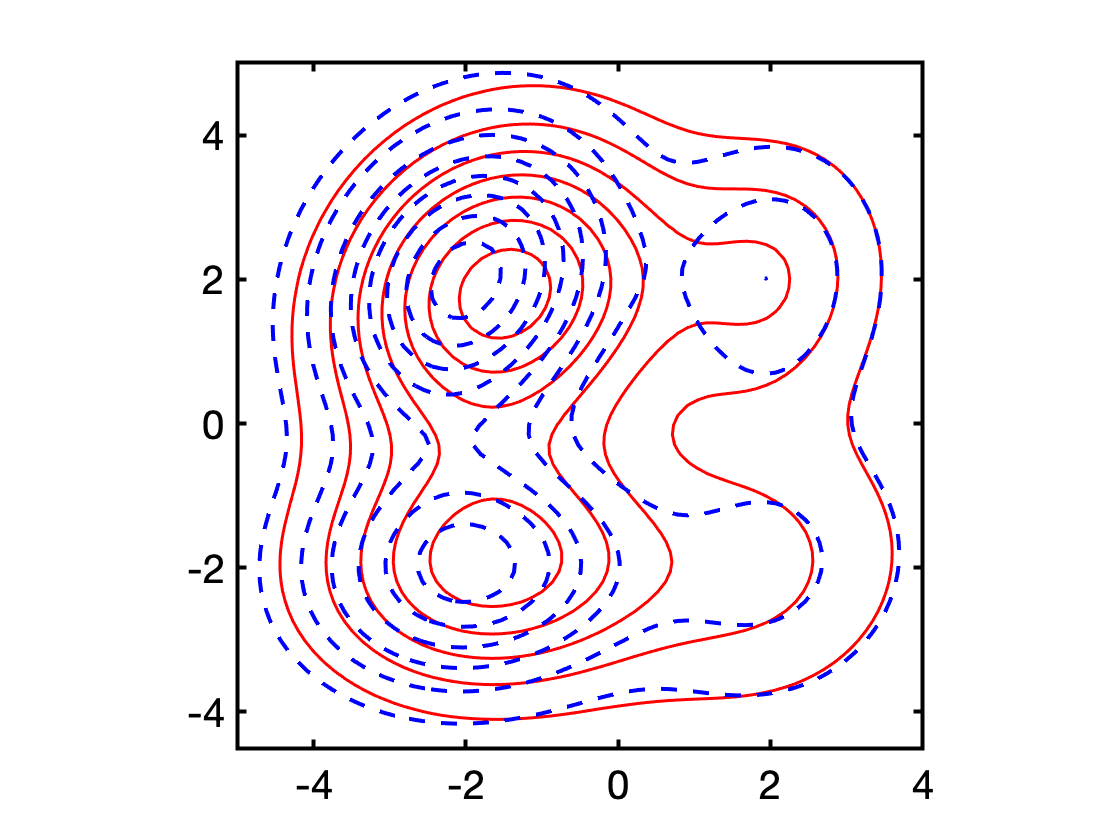}}}
\subfigure[$t=\frac{8}{10}$]{ \resizebox{!}{2.2cm}{\includegraphics{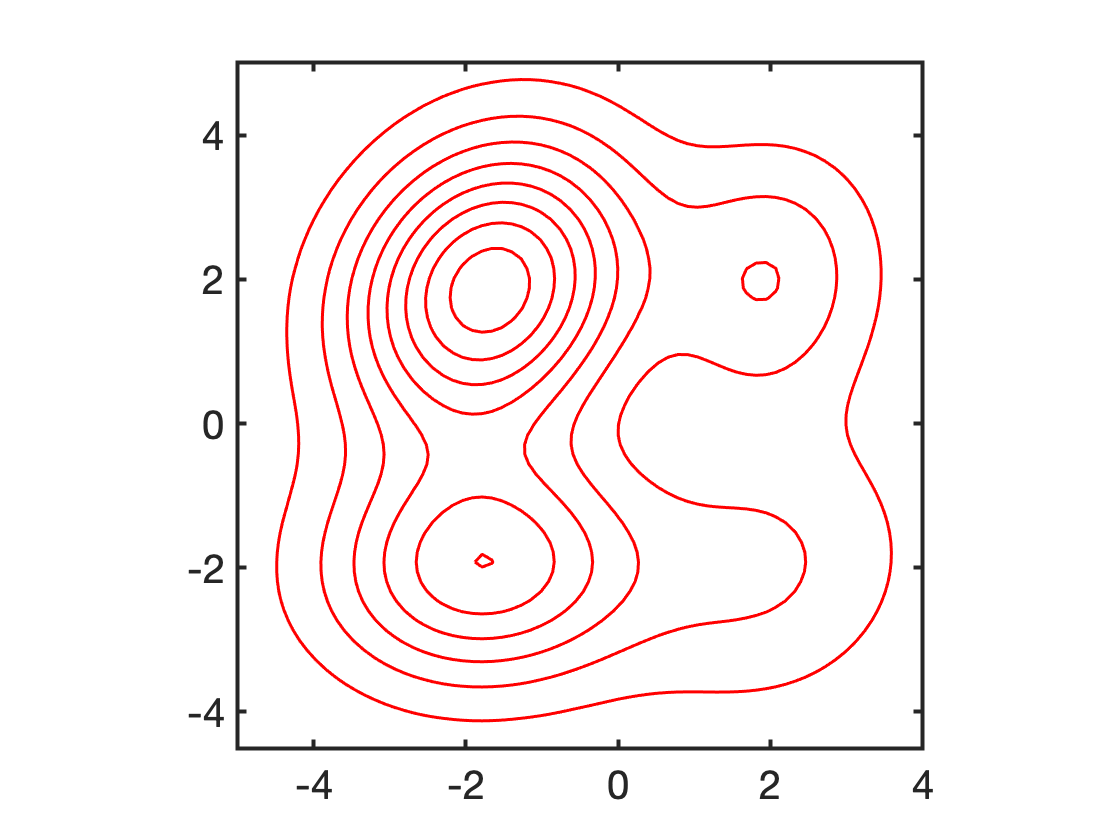}}}
\subfigure[$t=\frac{9}{10}$]{ \resizebox{!}{2.2cm}{\includegraphics{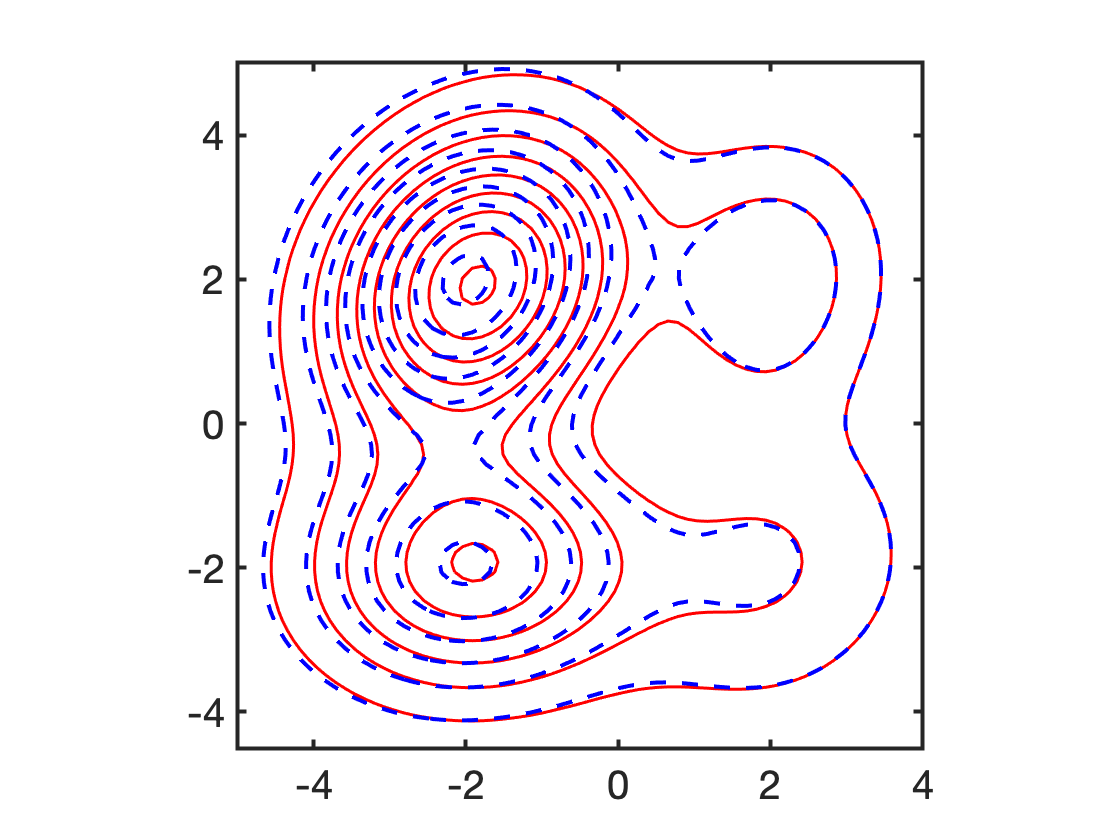}}}
\subfigure[$t=1$]{ \resizebox{!}{2.2cm}{\includegraphics{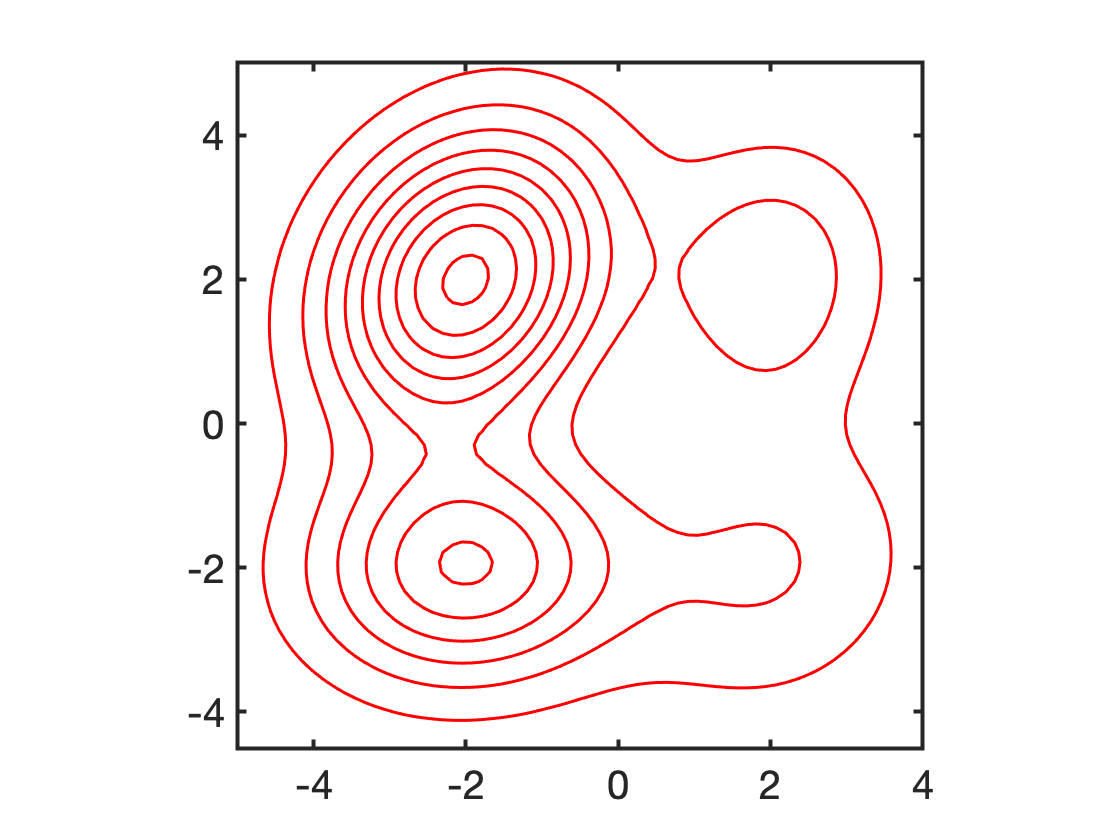}}}
\caption{The result of measure-valued geodesics regression for Gaussian mixtures.}
\label{GMM_results}
\end{figure}

\section{Estimation of invariant measures}\label{sec:invariant}
As a potential application of the proposed regression, we describe an approach to  approximate the Perron-Frobenius operator and stationary distribution (if any exists) associated with a dynamical system using a few available distributional snapshots. Most studies in the literature present numerical computation of invariant measures for known dynamics, or where the pointwise correspondence between the successive points in time is available (See \cite{korda2021convex} and references therein). In our approach, we hypothesize no information on the underling dynamics or the trajectories of particles.   

A discrete-time dynamical system 
\[
x_{k+1}=S(x_k)
\]
 on the measure space $(\X,\mathcal{B}(\X),\lambda)$ is defined by a $\lambda$-measurable state transition map
$
S: \X \to \X.
$
This map is assumed to be non-singular, which guarantees that the push-forward operator under $S$ preserves the absolute continuity of (probability) measures with respect to $\lambda$. 
In continuous-time setting, the state transition law can be represented by a flow map $x_{t+\tau}=S_\tau(x_t)$ for $\tau\geq 0$, where $x_t$ denotes the state of dynamics at time $t$.  We assume that the dynamics is time-invariant. 
The evolution of probability measures under $S_\tau$ can be written as $\mu_{t+\tau}={S_\tau}_\#{\mu_{t}}$.

Let $L^1(\mathbb{X}):=L^1(\X,\mathcal{B}(\X),\lambda)$ be the space of integrable functions on $\X$, then the \PFO ~(PFO), $P_\tau:L^1(\mathbb{X})\rightarrow L^1(\mathbb{X})$, is defined by
\begin{equation}
\label{deterministic}
\int_A P_\tau f~ d\lambda=\int_{S_\tau^{-1}(A)} f~ d\lambda,\quad \forall A\in\Sigma,
\end{equation}
for $f\in L^1(\mathbb{X})$. When $f$ is a density associated with the probability measure $\mu_f$, PFO can be thought of as a push-forward map, that is, $P_\tau\mu_f={S_\tau}_\# \mu_f$.
The connection between the dynamics and PFO can be seen in that the PFO translates the center of a Dirac measure $\delta_x\in L^1(\X)$ in compliance with the underlying dynamics, that is, ${S_\tau}_\#\delta_x=\delta_{S_\tau (x)}$. It also relates to the Koopman operator which acts on the observable functions through a duality correspondence~\cite{brunton2021modern}. 

It is standard that PFO is a Markov operator, namely, a linear operator which maps probability densities to probability densities. It is also a weak contraction (non-expansive map), in that, $\|P_\tau f\|_{L^1}\leq \|f\|_{L^1}$ for any $f \in L^1(\X)$. 
If $\mu={S_\tau}_\# \mu$, then $\mu$ is an invariant measure for $S_\tau$.
For many dynamical systems, the PFO drives the densities into an invariant one (measure, in general), which is unique if the map $S_\tau$ is ergodic with respect to $\lambda$. 

For non-deterministic dynamics where $S_\tau(x)$ is a an $\X$-valued random variable on some implicitly given probability space, the Perron-Frobenius operator reads
\begin{equation}
\label{non-deterministic}
P_\tau f(x)=\int_\X K_\tau(y,x)~f(y) ~\dd \lambda(y),
\end{equation}
where the transition density function is denoted by $K_\tau(y,x)~:~\X\times\X\rightarrow \left[0,\infty\right]$. The transition density function exists, if $S_\tau(x)$ does not assign non-zero measures to null sets~\cite{klus2018data}.

The most popular method in the literature to discretize PFO is  Ulam's method \cite{li1976finite,froyland2014computational}. In this approach, the state-space ($\X$) is divided into a finite number of disjoint measurable boxes $\{B_1, . . . , B_n\}$. The PFO is approximated with a $n\times n$ matrix with elements $p_{ij}$. To do so, first we can choose a number ($q$) of test points (samples) $\{x_l^i\}_{l=1}^q$ within each Box $B_i$ randomly. Then, the elements of this matrix can be estimated by

\[
p_{ij}=\frac{1}{q}\sum_{l=1}^{q} \mathbf{1}_{B_j}(S(x_l^i))
\]
where $\mathbf{1}_{B_j}$ denotes the indicator function for the box $B_j$.

Ulam's method requires the trajectories of test points to be available, which is not the case in many practical situations, where the trajectories of test points (i.e., mass particles, agents, or so on) are missing. We can use the method of this paper for regression, to estimate the Perron-Frobenius operator, and subsequently invariant measure corresponding to some  dynamical system based on the collective behavior of particles. In other words, we postulate no knowledge of the underlying dynamics and assume that only a limited number of one-time marginal distributions is available at different timestamps $t_i$, $i\in\{1,\ldots,N\}$. In fact, these one-time marginals are the evolution of some initial distribution at $t=0$ under the action of discretized dynamics $x_{t+\tau}=S_\tau(x_t)$. As mentioned in Proposition \ref{scallable_time}, the time can be scaled to lie within the interval $\left[0,1\right]$.
These distributions are quantized by suitable partitioning of the domain $\X=\bigcup_{\ell=1}^n B_\ell$, which is a compact set, followed by counting the particles in each of the $n$ boxes $B_\ell$ to obtain $\mu_{t_i}$, with Diracs placed at the center of each interval.
The Minimizer of multi-marginal formulation of the regression problem  (i.e., Eq. \eqref{discrete}), provides a coupling $\hat{\pi}$ between the distributions at two instants of time $t=0$ and $t=1$. Also, this can be thought of as a probability measure over the space of linear curves in $\X$, which indicates how much mass is transporting along the lines from $t=0$ to $t=1$. Putting these two views together, one can conclude that $\hat{\pi}$ gives a correlation law between the distributions of particles at $t=0$ and $t=1$, where the mass particles move at constant speeds from $t=0$ to $t=1$. It should be noted that the entropy regularization of cost can be employed to find $\hat{\pi}$ efficiently, as discussed in Section \ref{sec:discretization}. 

Notice that $\hat{\pi}$ contains the information on the distributions of the particles at  $t=0$ and $t=1$, namely, $p_{\{t=0\}}$ and $p_{\{t=1\}}$ respectively, as well as the correlation law between the two end-points. Therefore, we can determine a transition probability matrix (of a Markov chain)
\begin{equation}
\label{approximated_PFO}
Q(\ell,\ell^\prime)=\pi^*(\ell,\ell^\prime)/p_{\{t=0\}}(\ell),
\end{equation}
for $\ell,\ell^\prime\in\{1,\ldots,n\}$. From a measure-theoretic point of view $Q$ can be seen as the disintegration of $\hat{\pi}$.
This transition probability matrix can be deemed as a finite-dimensional approximation of the Perron-Frobenius operator corresponding to the underlying dynamics, that is, $P_\tau$ in either \eqref{deterministic} or \eqref{non-deterministic} for $\tau=1$. 
Assuming that the underlying dynamics is time-invariant (or time-homogeneous for non-deterministic dynamics), the invariant distribution of dynamical system can be approximated by the stationary vector of $Q$. 

To exemplify this approach, we apply it to logistic map in order to predict its asymptotic statistical properties for different values of population-growth parameter. The logistic model for population growth is
\begin{equation}
\label{logistic}
x_{k+1}=T(x_k)=rx_k(1-x_k),
\end{equation}
where $x_k\in[0,1]$, $k\in\{0,1,\ldots\}$,
and $r$ is the population-growth parameter, see~\cite{lasota2013chaos}. The behavior of dynamics changes from regular to chaotic as the parameter $r$ varies from 0 to 4.  We visualize the results for two values of $r$, namely, $r=3$ and $r=4$.

For $2\leq r \leq 3$, starting from any initial point in $(0,1)$, the population will eventually approach the same value $\frac{r-1}{r}$, so-called ``attractor". However, as $r$ approaches 3 the convergence becomes  increasingly slow.  
For $r=4$, it is known that this system displays highly chaotic behavior; in fact, starting from any initial
point $x_0\in(0,1)$, the sequence $\{x_k \,|\, k=1,2,\ldots \}$ covers densely the interval $[0, 1]$, see~\cite{ding2010statistical}. Yet, the dynamical system is statistically stable
in that, any initial probability distribution tends towards an invariant measure with density
\begin{equation}
\label{stationary_logistic}
    f_s(x)=\frac{1}{\pi\sqrt{x(1-x)}}.
\end{equation}

Our aim is to estimate where the mass particles will eventually concentrate by the iterates of logistic map using only a few probability distributions obtained from the evolution of an initial distribution under the action of this map. We do not hypothesize any information on the correlations between each pair of the probability distributions.  Namely, the logistic map is only used to construct the distributional data. To do so, the interval $\left[0,1\right]$ is partitioned into $n$ sub-intervals of equal width and the evolution of 1000 points, uniformly selected in $\left[0,1\right]$, is used to construct $N$ distributional data under the successive iterates of logistic map.  We index the data with timestamps $t_i=\frac{i-1}{N-1}$, $i=1,\cdots,N$. The logistic map herein can be though of as the flow map of a dynamical system for the time lag $\tau=\frac{1}{N-1}$.  We provide the results for different values of $N$, $n$, and the regularization parameter $\epsilon$ in Sinkhorn's algorithm, to examine the sensitivity of the results to these parameters. 

The transition probability matrix $Q$ in Eq. \eqref{approximated_PFO} is computed for different values of $N$, $n$, and $\epsilon$ and accordingly the stationary distribution of $Q$ is obtained. The results are depicted in Fig. \ref{logistic_map_r3} which show that the stationary distribution is concentrated around the stable fixed point of the logistic map ($x_{k+1}=3x_k(1-x_k)$) at $x=\frac{2}{3}$. In particular, the results for three values of $n$ are illustrated in the first row. The second row represents the impact of $N$ on estimated stationary distribution. As the number of distributional data varies from 3 to 9, we observe the stationary distribution is concentrated more densely around the fixed point. Finally, the third row relates to the sensitivity of results to regularization parameter $\epsilon$. Although for smaller values of $\epsilon$ the convergence of Sinkhorn iterates to the optimal solution becomes slower, we achieve a better result for the stationary distribution in terms of having a lower variance.   

\begin{figure}[htb]  
\centering
\subfigure[$N=5,~n=30,~\epsilon=0.1$]{ \resizebox{!}{2.9cm}{\includegraphics{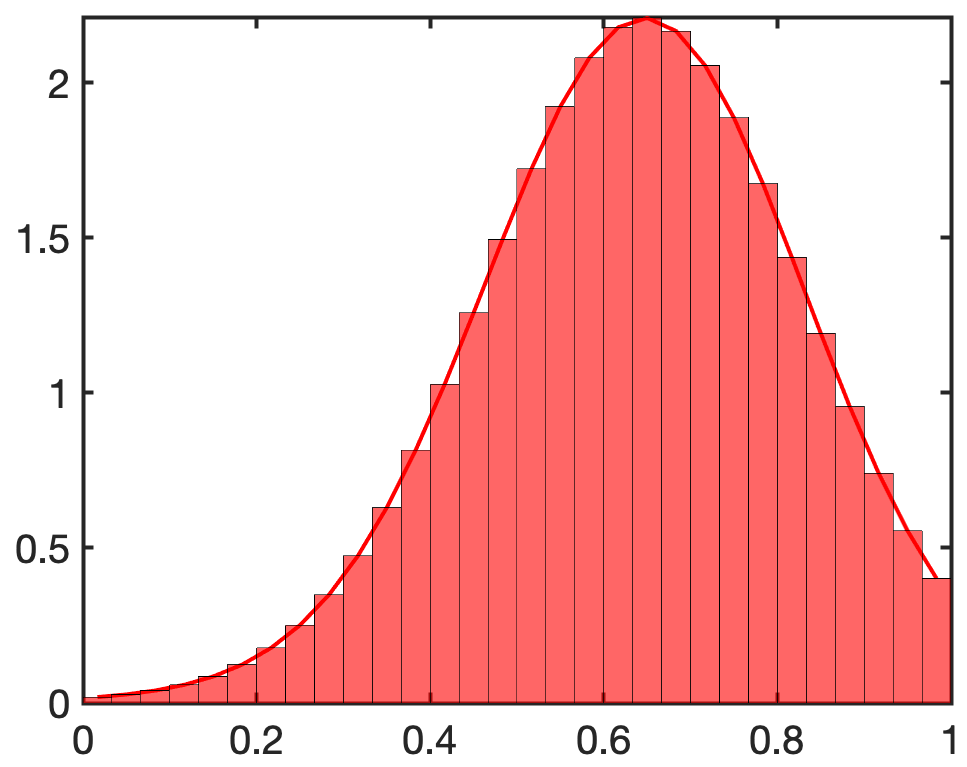}}} \hspace*{10pt}
\subfigure[$N=5,~n=100,~\epsilon=0.1$]{ \resizebox{!}{2.9cm}{\includegraphics{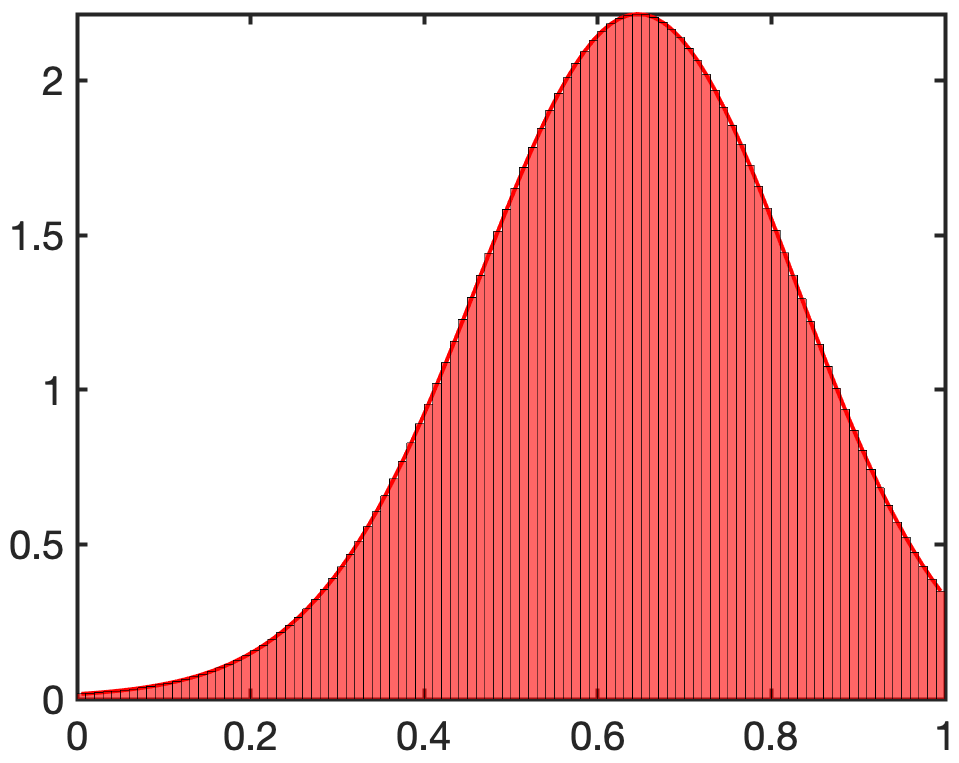}}} \hspace*{10pt}
\subfigure[$N=5,~n=200,~\epsilon=0.1$]{ \resizebox{!}{2.9cm}{\includegraphics{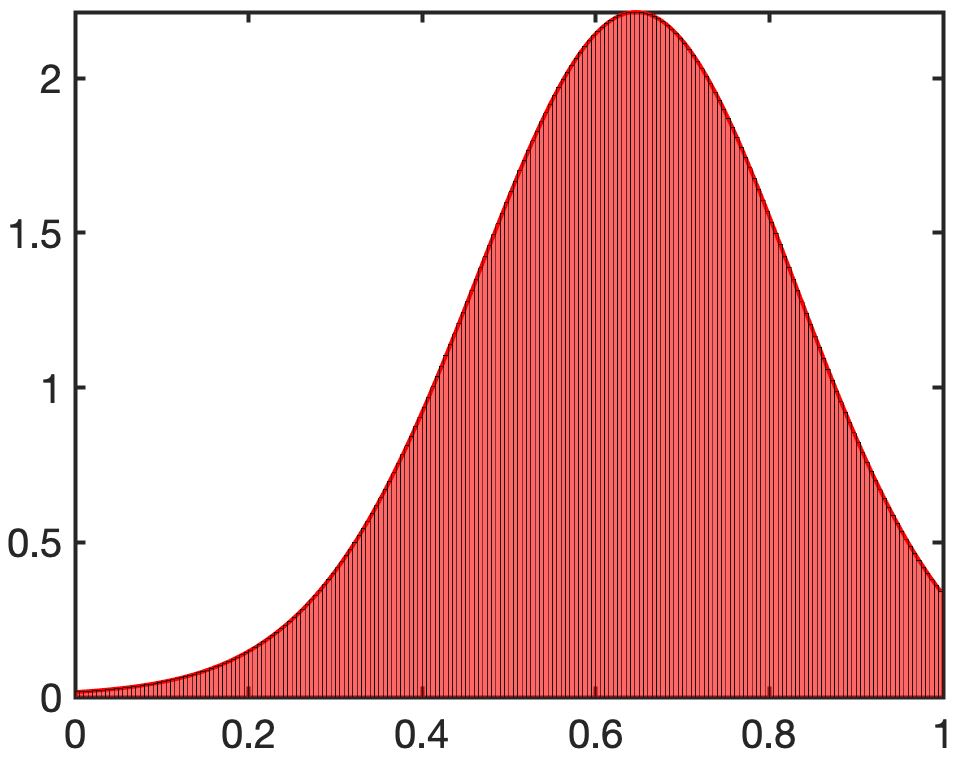}}}
\subfigure[$N=3,~n=100,~\epsilon=0.05$]{ \resizebox{!}{2.9cm}{\includegraphics{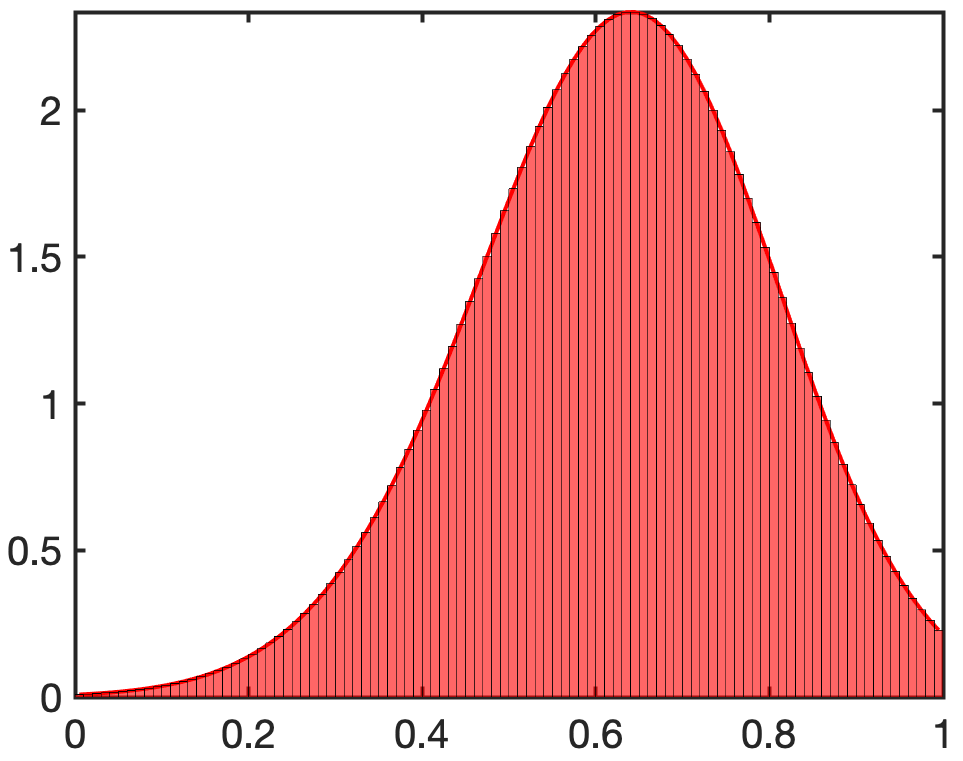}}}\hspace*{10pt}
\subfigure[$N=6,~n=100,~\epsilon=0.05$]{ \resizebox{!}{2.9cm}{\includegraphics{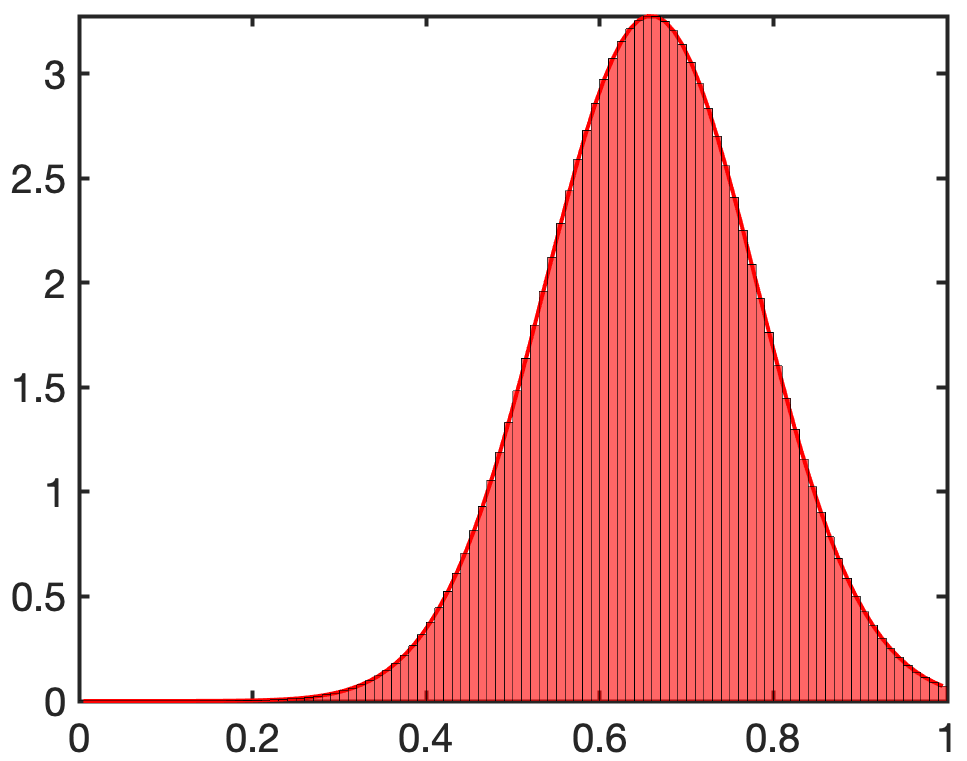}}}\hspace*{10pt}
\subfigure[$N=9,~n=100,~\epsilon=0.05$]{ \resizebox{!}{2.9cm}{\includegraphics{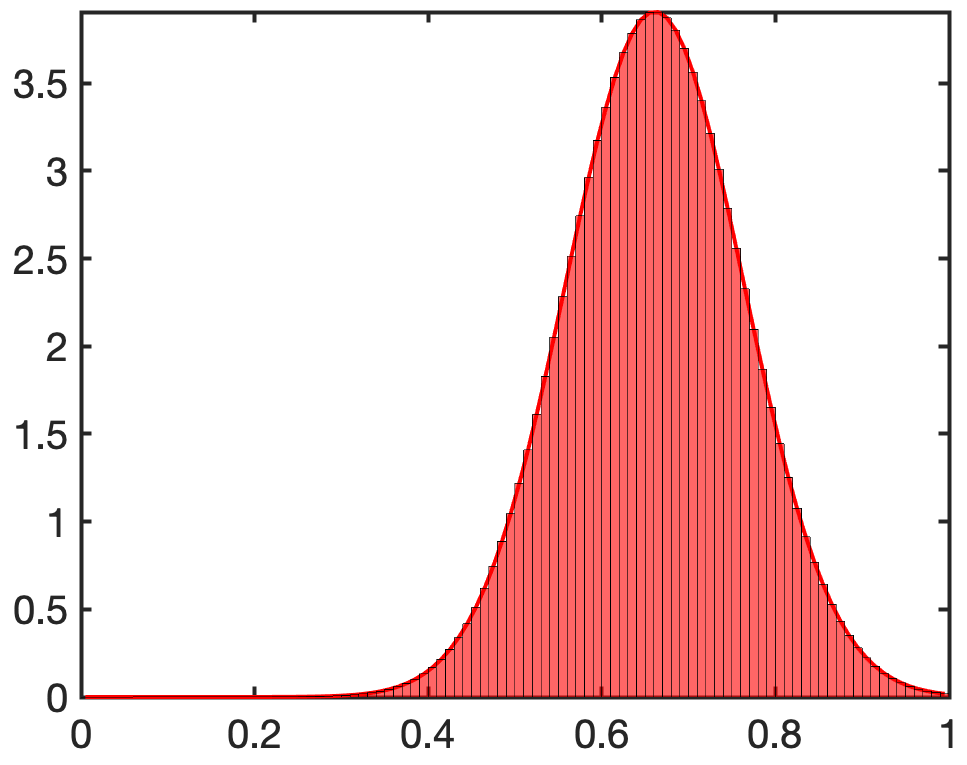}}}
\subfigure[$N=6,~n=100,~\epsilon=0.2$]{ \resizebox{!}{2.9cm}{\includegraphics{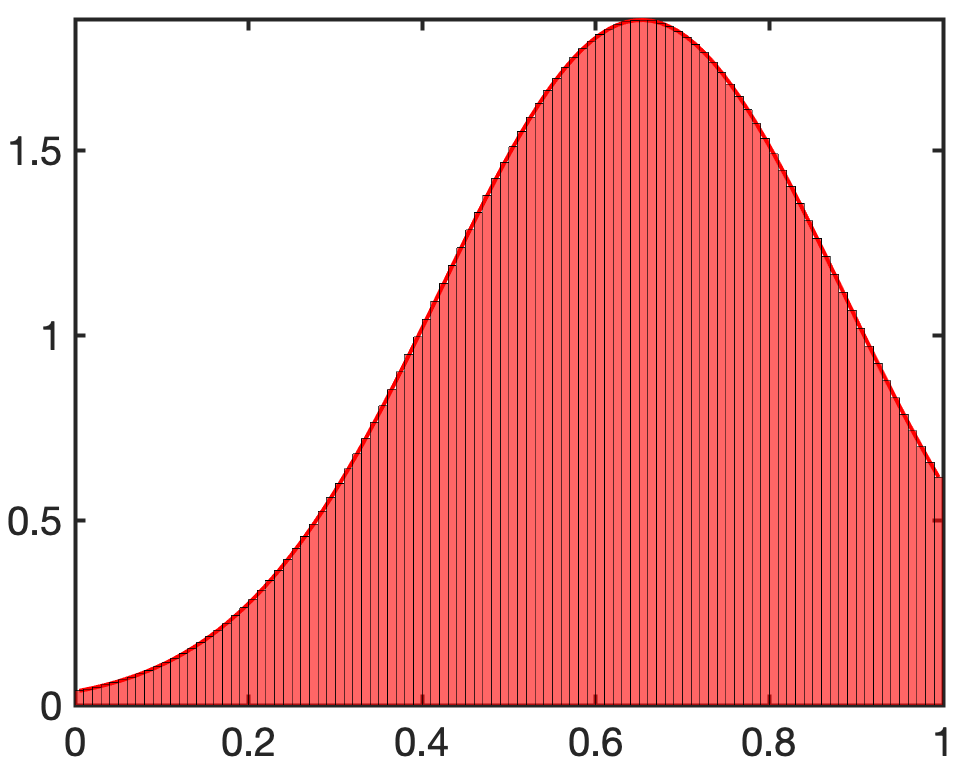}}}\hspace*{10pt}
\subfigure[$N=6,~n=100,~\epsilon=0.1$]{ \resizebox{!}{2.9cm}{\includegraphics{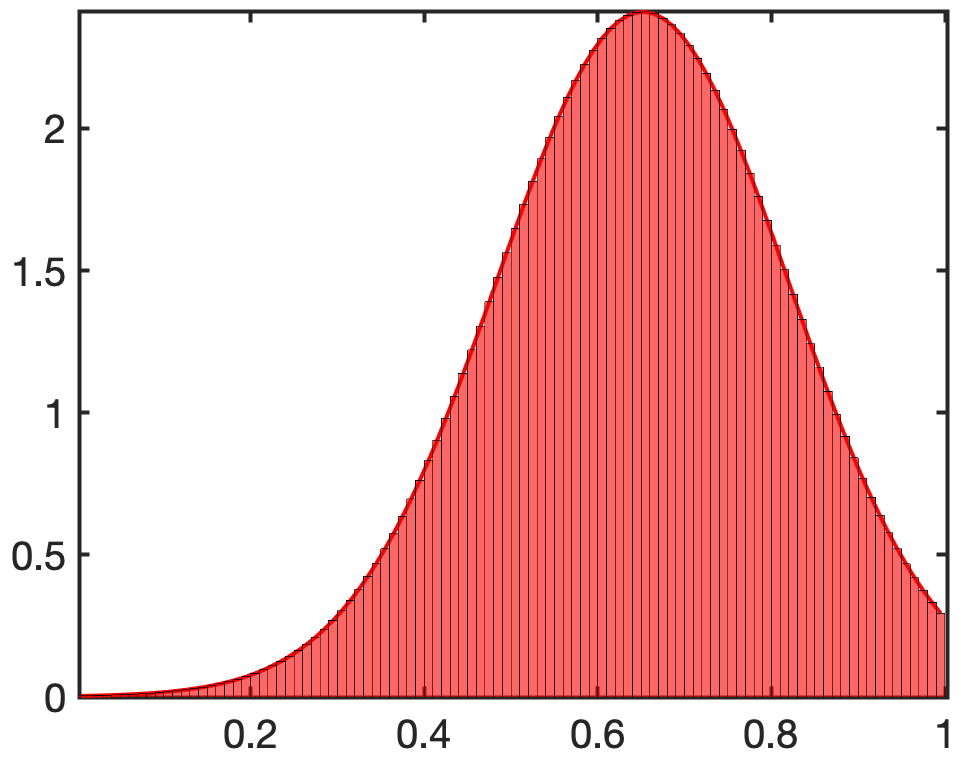}}}\hspace*{10pt}
\subfigure[$N=6,~n=100,~\epsilon=0.03$]{ \resizebox{!}{2.9cm}{\includegraphics{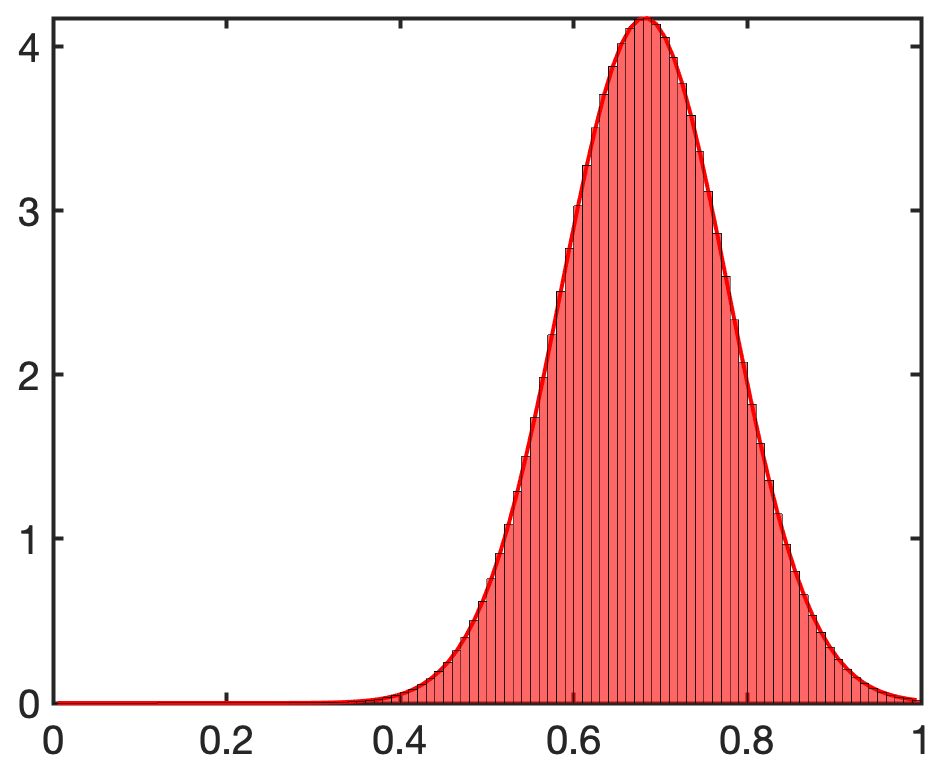}}}
\caption{The stationary distribution of the Markov chain (histogram and red fitting curve) for logistic map $x_{k+1}=3x_k(1-x_k)$. The figures show the concentration of stationary distribution around the single stable fixed point of logistic map at $x=\frac{2}{3}$ for different values of $N$, $n$, and $\epsilon$. }
\label{logistic_map_r3}
\end{figure}

Figure \ref{logistic_r4} depicts the approximated invariant measure for the logistic map where $r=4$. In this case $[0,1]$ is partitioned into 50 equi-length sub-intervals and we construct 5 distributional data by the iterates of logistic map starting from a uniform distribution. The blue curve represents the analytic invariant measure given in \eqref{stationary_logistic}.

\begin{figure}[htb]
	\centering
	\includegraphics[width=1.7in]{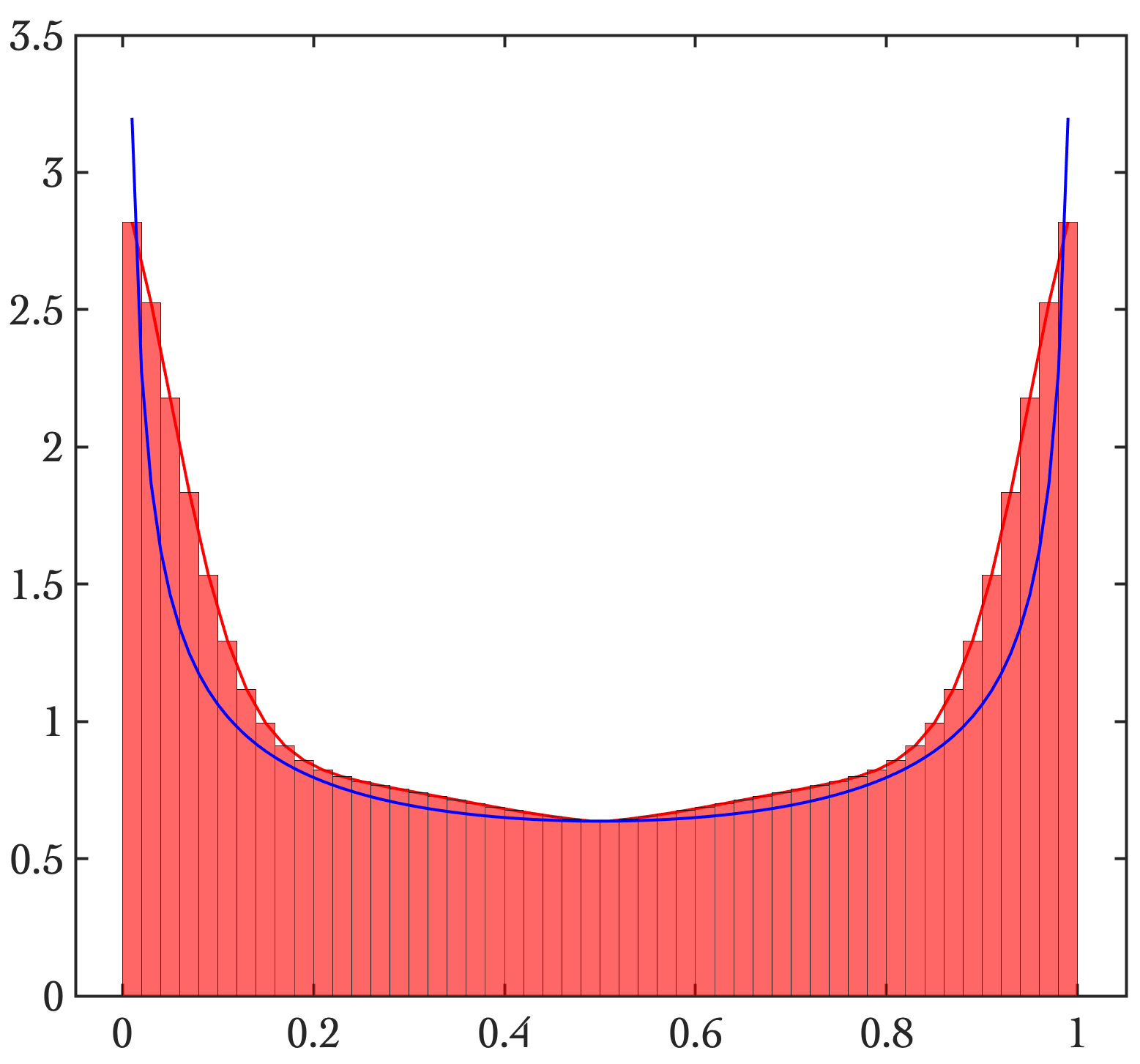}
	\caption{The stationary distribution of the Markov chain (histogram and red fitting curve) is compared to the invariant density of the logistic map for $r=4$ (blue).}
	\label{logistic_r4}
\end{figure}

\section{CONCLUDING REMARKS}\label{sec:conclusion}
In this paper we presented an approach to estimate flow from distributional data. It can be seen as a generalization of Euclidean regression to the Wasserstein space relying on measure-valued curves. It represents a relaxation of geodesic regression in Wasserstein space. The apparently nonlinear primal problem and be recast as a multi-marginal optimal transport, leading to a formulation as a linear program. Entropic regularization and a generalized Sinkhorn algorithm can be effectively employed to solve this multi-marginal problem. 

The proposed framework can be used to estimate correlation between given distributional snapshots. Potential applications of the theory are envisioned to aggregate data inference~\cite{haasler2019estimating}, estimating meta-population dynamics~\cite{nichols2017using}, power spectra tracking~\cite{jiang2011geometric}, and more generally, system identification~\cite{karimi2020data}.
The framework encompasses the case where probability laws are sought for dynamical systems, generating curves to approximate data sets. Future research along these lines, of utilizing higher-order curves and general dynamics, should prove useful in application that may include weather prediction, modeling traffic, besides more traditional ones in computer vision.

\bibliographystyle{siamplain}
\bibliography{references}
\end{document}